\newtheorem{prop}{Proposition}
\newtheorem{proof}{proof}
\begin{document}

%
% paper title
% Titles are generally capitalized except for words such as a, an, and, as,
% at, but, by, for, in, nor, of, on, or, the, to and up, which are usually
% not capitalized unless they are the first or last word of the title.
% Linebreaks \\ can be used within to get better formatting as desired.
% Do not put math or special symbols in the title.
\title{Networked ISAC for Low-Altitude Economy: Coordinated Transmit Beamforming and UAV Trajectory Design}

%\author{
%\IEEEauthorblockN{Gaoyuan~Cheng, Xianxin~Song, Zhonghao~Lyu, and  Jie~Xu%
%\thanks{ J. Xu is the corresponding author. }
%}

%\IEEEauthorblockA{School of Science and Engineering (SSE) and Future Network of Intelligence Institute (FNii), \\ The Chinese University of Hong Kong, Shenzhen, Guangdong, China} \\
%Email: \{gaoyuancheng, xianxinsong, zhonghaolyu\}@link.cuhk.edu.cn, xujie@cuhk.edu.cn% <-this % stops a space
%}

\author{Gaoyuan~Cheng, Xianxin~Song, Zhonghao~Lyu, and  Jie~Xu
\thanks{An earlier version of this paper has been accepted at the 2024 IEEE/CIC International Conference on Communications in China (ICCC) \cite{cheng2024LAE}.
}
\thanks{G. Cheng, X. Song, Z. Lyu, and J. Xu are with the School of Science and Engineering (SSE), the Shenzhen Future Network of Intelligence Institute (FNii-Shenzhen), and the Guangdong Provincial Key Laboratory of Future Networks of Intelligence, The Chinese University of Hong Kong, Shenzhen, Guangdong 518172, China (e-mail: gaoyuancheng@link.cuhk.edu.cn, xianxinsong@link.cuhk.edu.cn, zhonghaolyu@link.cuhk.edu.cn, xujie@cuhk. edu.cn). J. Xu is the corresponding author.}
}

%\author{Gaoyuan~Cheng
%\thanks{Part of this paper has been presented in the IEEE International Conference
%on Communications (ICC) 2023 \cite{Cheng2023Coordinated}.
%}
%\thanks{G. Cheng, Y. Fang, and J. Xu are with the School of Science and Engineering (SSE), Future Network of Intelligence Institute (FNii), The Chinese University of Hong Kong (Shenzhen), Shenzhen 518172, China (e-mail: gaoyuancheng@link.cuhk.edu.cn, fangyuan@cuhk.edu.cn, xujie@cuhk.edu.cn). J. Xu is the corresponding author.}
%\thanks{Y. Fang is with the Future Network of Intelligence Institute (FNii), The Chinese University of Hong Kong, Shenzhen, 518172, China (e-mail: fangyuan@cuhk.edu.cn).}
%\thanks{D. W. K. Ng is with the School of Electrical Engineering and Telecommunications, University of New South Wales, Sydney, NSW 2052,  Australia (e-mail: w.k.ng@unsw.edu.au).}
%}

% make the title area
\maketitle

% As a general rule, do not put math, special symbols or citations
% in the abstract

\begin{abstract}
This paper exploits the networked integrated sensing and communications (ISAC) to support low-altitude economy (LAE), in which a set of networked ground base stations (GBSs) cooperatively transmit joint information and sensing signals to communicate with multiple authorized unmanned aerial vehicles (UAVs) and concurrently detect unauthorized objects over the interested region in the three-dimensional (3D) space. We assume that each GBS is equipped with uniform linear array (ULA) antennas, which are deployed either horizontally or vertically to the ground. We also consider two types of UAV receivers, which have and do not have the capability of canceling the interference caused by dedicated sensing signals, respectively. Under each setup, we jointly design the coordinated transmit beamforming at multiple GBSs together with the authorized UAVs' trajectory control and their GBS associations, for enhancing the authorized UAVs' communication performance while ensuring the sensing requirements. In particular, we aim to maximize the average sum rate of authorized UAVs over a given flight period, subject to the minimum illumination power constraints toward the interested 3D sensing region, the maximum transmit power constraints at individual GBSs, and the flight constraints of UAVs. These problems are highly non-convex and challenging to solve, due to the involvement of binary UAV-GBS association variables as well as the coupling of beamforming and trajectory variables. To solve these non-convex problems, we propose efficient algorithms by using the techniques of alternating optimization, successive convex approximation, and semi-definite relaxation. Numerical results show that the proposed joint coordinated transmit beamforming and UAV trajectory designs efficiently balance the sensing-communication performance tradeoffs and significantly outperform various benchmarks. It is also shown that the horizontally placed antennas lead to enhanced performance compared with their vertical counterparts due to the more flexible multi-beam design, and the sensing interference cancellation ability at UAV receivers is advantageous for further enhancing ISAC performance.

\end{abstract}

%\vspace{-5ex}

% no keywords
\begin{IEEEkeywords}
Networked integrated sensing and communications (ISAC), low-altitude economy unmanned aerial vehicle (UAV), coordinated transmit beamforming, trajectory design, optimization.
\end{IEEEkeywords}

\section{Introduction}
Low-altitude economy (LAE) corresponds to a comprehensive economic form consisting of various low-altitude flight activities of unmanned and manned aircraft, such as unmanned aerial vehicles (UAVs) and electric vertical take-off and landing (eVTOL). It is envisioned that LAE can enable a series of low-altitude applications in transportation, cargo delivery, entertainment, environmental monitoring, agriculture, and public security to create great economic and social value, which has attracted explosively increasing research attention around the world recently \cite{Whitepaper4}. The successful implementation of LAE, however, requires the safe operation of various aircraft. It is essential to provide seamless wireless communication connections and ubiquitous sensing for massive aircraft at the low altitudes, thus extending their serving region, supporting their trajectory planning and tracking, and providing real-time monitoring to prevent invasion of unauthorized objects \cite{Mu2023magazine, Meng2023magazine}.

As one of the key technologies for six-generation (6G) wireless networks, integrated sensing and communications (ISAC) has emerged as an efficient solution to support LAE \cite{Wu2021Comprehensive, Mozaffari2019Tutorial, mozaffari2021toward, liu2022integrated}. With ISAC, ground base stations (GBSs) can transmit wireless signals to communicate with authorized aircraft as aerial users, and reuse the reflected echo signals to sense low-altitude airspace and monitor the invasion of unauthorized objects \cite{jiang20236g,  Cui2023Specific, Hu2022Trajectory}. More specifically, thanks to the inter-connected nature of distributed GBSs, networked ISAC is particularly appealing to provide large-scale sensing and communication services for LAE, in which distributed GBSs can cooperate in not only coordinated multi-point (CoMP) transmission and reception, for  communication \cite{Yu2010Coordinated, Gesbert2010Multi, Nigam2014Coordinated} but also distributed multiple-input multiple-output (MIMO) radar for sensing \cite{Fan2023Sensing, zhang2021perceptive, fishler2006spatial, haimovich2007mimo}. Compared with conventional ISAC systems with isolated transceivers, networked ISAC offers several advantages. First, networked ISAC significantly extends the coverage of sensing communication by GBS cooperation, which is particularly useful for LAE with massive aircraft distributed in large-scale 3D space. Next, networked ISAC can enable cooperative transmission and reception among GBSs via implementing joint signal processing at the central processors \cite{Cheng2024Optimal, Ngo2017Cell-Free, Wu2015Cloud}. This allows the GBSs to properly mitigate and even utilize the multi-cell air-ground interference in sensing and communication. Furthermore, GBSs are distributed at separate locations, which can view the same target from different angles and exploit the spatial diversity of target radar cross section (RCS) to enhance the sensing performance \cite{Fei2023magazine, huang2022coordinated, Godrich2010Target, Lehmann2006High}. To fully reap the above benefits, it is of utmost importance to exploit networked ISAC to support real-time communication and tracking of authorized aircraft, and provide seamless monitoring of targeted regions in the 3D space.

In the literature, there have been various prior works investigating the use of terrestrial cellular networks to support the communication and sensing of UAVs. On the one hand, cellular-connected UAV has been extensively studied \cite{Zeng2019CellularUAV}, in which GBSs support the communication of UAVs as aerial users by using the spectrum resources originally allocated to terrestrial subscribers. In particular, existing works studied cellular-connected UAV from different perspectives such as interference mitigation \cite{peiming2022Cellular, Mei2019Cellular}, energy-efficient communications\cite{Zeng2017Energy, Zhan2020EnergyUAV}, and mobile edge computing \cite{Cao2018Mobile}. On the other hand, some recent works \cite{Cui2023Specific, Hu2022Trajectory} studied the use of GBS to provide ISAC services with cellular-connected UAVs. For instance, the work \cite{Cui2023Specific} considered a system with one GBS and multiple UAVs, in which a dual identity association-based ISAC method was developed to support swift beam alignment of multiple UAVs. The authors in \cite{Hu2022Trajectory} exploited the cellular-connected UAV to enable bi-static sensing  with GBS. In these above works with cellular-connected UAVs \cite{Cui2023Specific, Hu2022Trajectory, Zeng2019CellularUAV, peiming2022Cellular, Mei2019Cellular, Zeng2017Energy, Zhan2020EnergyUAV, Cao2018Mobile}, the UAV trajectory control is introduced as a new design degree of freedom (DoF) to enhance the desired signal strength and mitigate the undesired interference, thus improving the communication and sensing performances. However, these works only considered communications with cellular-connected UAVs \cite{Zeng2019CellularUAV, Mei2019Cellular, Zeng2017Energy, Zhan2020EnergyUAV, Shuowen2019Cellular}, or only investigated mono-static or bi-static ISAC for cellular-connected UAVs with one single GBS. The research on using terrestrial networked ISAC systems to support LAE has not been investigated in the literature yet.

It is worth noting that there is another line of research on UAV-assisted ISAC, in which UAVs are exploited as aerial platforms such as aerial base stations (ABSs) to provide air-to-ground ISAC services towards ground users and terrestrial objects \cite{Jing2024ISAC, Weijie2023UAV, Meng2023Throughput, Chien2024Joint, Wu2023uav, Lyu2022Joint, wang2020constrained}. For instance, the authors in \cite{Lyu2022Joint} employed a UAV-enabled ISAC platform to communicate with terrestrial users and sense interested targets in pre-determined areas, in which a joint transmit beamforming and UAV trajectory design algorithm was proposed to properly balance the tradeoff between average communication rate and sensing beampattern gains. The work \cite{wang2020constrained, Ding2023Multi} further investigated the scenario with multi-UAV-enabled ISAC towards multiple ground users and targets. However, these prior works are not applicable for terrestrial ISAC supporting LAE, in which authorized UAVs (communication users) and unauthorized objects (sensing targets) are located at the low altitude in the three-dimensional (3D) space, thus making the coverage and ISAC transmit design more challenging.

 % challenge
Different from prior works, this paper studies the exploitation of networked ISAC to support LAE. In particular, we consider a scenario with multiple multi-antenna UAVs cooperatively designing the transmit beamforming to provide seamless communication with authorized UAVs and real-time monitoring of intended areas in the 3D space. We also exploit the trajectory design of authorized UAVs to optimize the communication performance subject to the sensing requirements. However, the efficient design of cooperative transmit beamforming at GBSs and trajectory optimization of authorized UAVs is particularly challenging. First, the sensing-communication performance tradeoff is dependent on both the transmit beamforming and the authorized UAVs' 3D locations.  In particular, GBSs may steer their transmit beams between sensing areas and UAV locations to balance the sensing and communication requirements, and authorized UAVs can fly closely to the areas with strong beamforming gains to enjoy enhanced communication performance. It is thus important but difficult to jointly adapt both transmit beamforming and UAV trajectory. Next, the simultaneous transmission of multiple GBSs introduces inter-GBS interference, while their ISAC operation also leads to interference between sensing and communication. Due to the line-of-sight (LoS) dominated air-ground wireless channels, the interference is rather strong, thus making interference management challenging. Furthermore, due to the mobility of authorized UAVs, their association relationship with GBSs may change over time, which may significantly affect the interference management and the resultant performance. It is thus necessary to further optimize the binary UAV-GBS association together with transmit beamforming and UAV trajectory control, thus making the design problem even more difficult. Therefore, we are motivated to address the above challenges in this work.

In particular, we consider that a set of networked GBSs with uniform linear array (ULA) antennas cooperatively communicate with multiple authorized UAVs and simultaneously sense an interested area in the 3D space to monitor unauthorized objects. The main results are summarized in the following:
\begin{itemize}

\item We consider that each GBS sends joint information and dedicated sensing signals to facilitate the ISAC operation. Under this joint signal design, we consider two types of UAV receivers, i.e., Type-I and Type-II UAV receivers, which have and do not have the capability to cancel the interference from dedicated sensing signals, respectively. Furthermore, to analyze the effect of antenna configuration on the ISAC performance, we consider two different cases when the ULA antennas at each GBS are deployed horizontally and vertically to the ground, respectively.

\item Under each of the above setups, we jointly optimize the coordinated transmit beamforming of GBSs, the authorized UAVs' trajectory control, and the GBS-UAV association, with the objective of maximizing the average sum rate of authorized UAVs over a particular ISAC period, subject to the transmit power constraints at GBSs, the practical flight constraints at UAVs, and the minimum illumination power constraints for sensing over the targeted 3D region. However, the formulated problem is difficult to solve due to the involvement of integer variable constraints and the coupling of beamforming and trajectory variables. To address these issues, we propose an efficient algorithm to find a high-quality solution by applying the techniques of alternating optimization (AO),  semidefinite relaxation (SDR), and successive convex approximation (SCA). The convergence of the proposed algorithm is ensured.

\item Finally, numerical results are provided to validate the performance of our proposed designs as compared to benchmark schemes with straight flight and isotropic transmission, respectively. It is shown that the proposed joint coordinated transmit beamforming and UAV trajectory design efficiently balances the tradeoff between sensing and communication performance and significantly outperforms the benchmark schemes. It is also shown that the horizontally placed antennas at GBSs lead to better ISAC performance than the vertically placed case. Furthermore, for both antenna configurations, the Type-II receivers with the ability to cancel the sensing interference are shown to outperform the Type-I receivers in terms of average sum rate.
% revise as abstract

\end{itemize}

The remainder of this paper is organized as follows. Section \ref{sec:system} presents the networked ISAC model for LAE and formulates the problem. Section \ref{Sec:solution} presents the proposed joint coordinated transmit beamforming and UAV trajectory design in the case with horizontally placed antennas at GBSs and Type-I UAV receivers. Section \ref{sec:other} presents the proposed designs for other cases of antenna configuration and UAV receivers. Section \ref{sec:numerical} provides numerical results to illustrate the efficiency of our proposed designs. Section \ref{sec:conclusion} concludes this paper.

%\newpage

{\textit{Notations:}} Lowercase and uppercase letters with boldface refer to vectors and matrices, respectively. $\mathbb E (\cdot)$ means the statistical expectation. For an arbitrary scalar $a$ and an arbitrary vector $\bf a$, $\left| a \right|$ and $\left\| {\bf{a}} \right\|$ denote the absolute value of $a$ and Euclidean norm of $\bf a$, respectively. The superscripts $T$ and $H$ denote the transpose and conjugate transpose operators for matrices and vectors. ${\mathbb C}^{x \times y}$ denotes the space of $x \times y$ complex matrices. $j=\sqrt{-1}$ denotes the imaginary unit.

%The real-valued Gaussian and the circularly symmetric complex Gaussian (CSCG) distributions are denoted by $\cal{N}({\bf x}, {\bf Y})$ and $\cal{CN}({\bf x}, {\bf Y})$, where ${\bf x}$ is the mean vector, ${\bf{Y}}$ is the corresponding  covariance matrix, and ``$ \sim $" means ``distributed as". The rank operator of a matrix is denoted as ${\text{rank}}(\cdot)$.

%{\textit{Notations:}} Vectors and matrices are denoted by boldface lowercase and uppercase letters, respectively. $\bf{I}$ denotes an identity matrix with appropriate dimension. $\mathbb E (\cdot)$ denotes the statistical expectation. For a scalar $a$, $\left| a \right|$ denotes its absolute value. For a vector $\bf v$, $\left\| {\bf{v}} \right\|$ denotes its Euclidean norm. For a matrix ${\bf M}$ of arbitrary dimension, ${\bf M}^T$ and ${\bf M}^H$ denote its transpose and conjugate transpose, respectively. ${\mathbb C}^{x \times y}$ denotes the space of $x \times y$ complex matrices. $\cal{N}({\bf x}, {\bf Y})$ and $\cal{CN}({\bf x}, {\bf Y})$ denote the real-valued Gaussian and the circularly symmetric complex Gaussian (CSCG) distributions with mean vector ${\bf x}$ and covariance matrix ${\bf{Y}}$, respectively, and ``$ \sim $" means ``distributed as". $Q(\cdot)$ denotes the Q-function.

\begin{figure}[ht]
\centering
    \includegraphics[width=7cm]{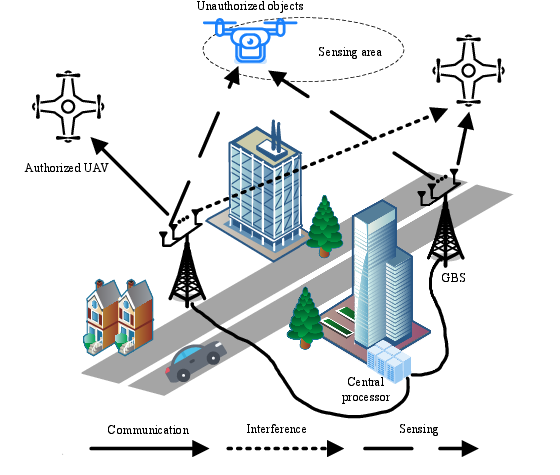}
\caption{Illustration of networked ISAC for communicating with authorized UAVs and monitoring of unauthorized objects in 3D space.}
\label{fig:system}
\end{figure}
%\vspace{-5ex}

\section{System Model and Problem Formulation}\label{sec:system}
We consider a networked ISAC system as illustrated in Fig. \ref{fig:system}, which consists of $M$ GBSs each  with $N_a$ antennas and $K$ authorized UAVs each with a single antenna. The sets of GBSs and authorized UAVs are denoted as ${\cal M} = \{1,\ldots,M \}$ and ${\cal K} = \{1,\ldots,K \}$, respectively. In this system, the GBSs perform downlink communication with their  associated authorized UAVs and concurrently sense the targeted 3D area for monitoring unauthorized objects.

We focus on the networked ISAC operation over a particular time period ${\cal T}=[0, T]$ with duration $T$, which is divided into $N$ time slots each with duration ${\Delta _t} = T/N$. Let ${\cal N}=\{1, \ldots, N\}$ denote the set of slots. Here, $N$ is chosen to be sufficiently large and accordingly ${\Delta _t}$ is sufficiently small, such that the UAVs' locations are assumed to be unchanged over each slot to facilitate the system design. Without loss of generality, we consider a 3D Cartesian coordinate system. Let ${\bf u}_m=(x_m,y_m)$ denote the horizontal coordinate of each GBS $m \in {\cal M}$, and ${\bf q}_k[n]= ({\hat x}_k[n], {\hat y}_k[n])$ denote the time-varying horizontal coordinate of UAV $k \in {\cal K}$ at slot $n \in {\cal N}$. All GBSs are located at the zero altitude, and each UAV $k \in {\cal K}$ is at a fixed altitude of $H_k>0$, where $H_k$'s can be different among different UAVs due to their pre-assigned flight region.

We consider two types of antenna configurations at GBSs, i.e., the horizontally and vertically placed ULA antennas, in which the ULA antennas at each GBS are placed parallel to the x-axis and z-axis in the 3D Cartesian coordinate system, respectively. For the horizontal and vertical antenna configuration cases, the corresponding angles of departure (AoDs) between GBS $m$ and UAV $k$ at time slot $n$ are respectively denoted as
\begin{align}
\theta^{(x)}_m ({{\bf{q}}_k}[n],H_k) = \arccos \frac{{{{\boldsymbol{\psi }}^T}({{\bf{q}}_k}[n] - {{\bf{u}}_m})}}{{\sqrt {{{\left\| {{{\bf{q}}_k}[n] - {{\bf{u}}_m}} \right\|}^2} + H_k^2} }}, \label{aod:x}
\end{align}
\begin{align}
\theta^{(z)}_m ({{\bf{q}}_k}[n], H_k) = \arccos \frac{H_k}{{\sqrt {{{\left\| {{{\bf{q}}_k}[n] - {{\bf{u}}_m}} \right\|}^2} + H_k^2} }}, \label{aod:z}
\end{align}
where ${\boldsymbol{\psi }}=[1,0]^{T}$. Let $\lambda$ denote the carrier wavelength and $d$ denote the antenna spacing. We have the steering vector ${\bf{a}}_m^{(\imath )}({{\bf{q}}_k}[n],H_k)$ as
\begin{align}
&{{{\bf{a}}}_{m}^{(\imath)}}({{\bf{q}}_k}[n],H_k)= [1,{e^{j2\pi \frac{d}{\lambda }\cos \theta^{(\imath)} ({{\bf{q}}_k}[n],H_k)}}, \ldots ,\nonumber \\
&~~~~~~~~{e^{j2\pi \frac{d}{\lambda }({N_a} - 1)\cos \theta^{(\imath)} ({{\bf{q}}_k}[n],H_k)}}]^T, {\imath}\in{\{ x, z \}}. \label{steer:1}
\end{align}
Similar as in prior work \cite{Lyu2022Joint}, we consider that the air-ground links from GBSs to UAVs are dominated by LoS channels. As a result, the channel vector between GBS $m$ and UAV $k$ at slot $n$ is denoted as
\begin{align}
{{\bf{h}}_{m}^{(\imath)}}({{\bf{q}}_k}[n],H_k)&= \sqrt {\kappa ({ {\left\| {{{\bf{q}}_k}[n] - {{\bf{u}}_m}} \right\|}^2+ {H_k^2}}) ^{ - 1}} \nonumber \\
&\cdot {{\bf{a}}_{m}^{(\imath)}}({{\bf{q}}_k}[n],H_k), \imath \in \{x,z\}
\end{align}
where $\kappa$ denotes the path loss at the reference distance of one meter.

\subsection{Communication Model}

First, we consider the communication from the GBSs to the UAVs, in which the GBSs send joint information and dedicated sensing signals to associated UAVs via coordinated transmit beamforming. At each slot $n$, each UAV is associated with one single GBS. We use a binary variable ${\alpha}_{m,k}[n] \in \{0,1\}$ to indicate the association relationship between GBS $m$ and UAV $k$ at slot $n$. Here, we have ${\alpha}_{m,k}[n]=1$ if UAV $k$ is associated with GBS $m$ at slot $n$ and ${\alpha}_{m,k}[n]=0$ otherwise. As such, we have $\sum\nolimits_{l \in {\cal M}} {{{\alpha}_{l,k}}[n]}  = 1, \forall k \in {\cal K}, n \in {\cal N}$. Let $s_{m,k}[n]$ denote the transmit information signal sent by GBS $m$ to UAV $k$ at time slot $n$, and ${\bf w}_{m,k}[n]$ denote the corresponding transmit beamforming vector. Here, we assume $s_{m,k}[n]$'s are independent and identically distributed (i.i.d.) random variables each with zero mean and unit variance. Let ${\bf s}_{m}[n] \in {\mathbb C}^{N_a \times 1}$ denote the dedicated sensing signal vector sent by GBS $m$ at time slot $n$, which can be generated as pseudorandom signals  with zero mean and covariance matrix ${{\bf{R}}_m}[n]= {\mathbb E}({{\bf{s}}_m}[n]{\bf{s}}_m^H[n]) \succeq {\bf 0}$. As a result, the transmitted signal by GBS $m$ at time slot $n$ is
\begin{align}
{{\bf{x}}_m}[n] = \sum\limits_{i \in {\cal K}} {{\bf{w}}_{m,i}}[n]{s_{m,i}}[n]  + {{\bf{s}}_m}[n], n \! \in \!{\cal N}, \label{com:sig:x}
\end{align}
and the covariance matrix of ${\bf x}_m[n]$ is
\begin{align}
{{\bf{X}}_m}[n] =\sum\limits_{i \in {\cal K}}  {\bf{w}}_{m,i}[n]{{\bf{w}}_{m,i}^H}[n] + {{\bf{R}}_m[n]}.
\end{align}
%\vspace{-3ex}

In \eqref{com:sig:x}, the dedicated sensing signals ${{\bf{s}}_m}[n]$ are jointly sent with information signals to achieve full DoFs for sensing, which may introduce additional interference. However, the dedicated sensing signals ${{\bf{s}}_m}[n]$'s are predetermined pseudorandom signals in practice, and thus can be eliminated by dedicatedly designed receivers. Hence, we consider two different types of UAV receivers based on whether they have the ability to cancel the interference caused by dedicated sensing signals, namely Type-I and Type-II UAV receivers, respectively.
\begin{figure*}
\begin{align} \label{SINR}
{\gamma ^{(\imath ,{\rm{I}})}} ({{\bf{W}}_{m,k}}[n], {{\bf{R}}_m}[n],{{\bf{q}}_k}[n]) = {\frac{{{{\rm{tr}}( {\bf{H}}_{m}^{(\imath)}({{\bf{q}}_k}[n],H_k){\bf{W}}_{m,k}[n] )} }}{{\sum\limits_{(l,i) \ne (m,k)} { {{\rm{tr}}( {\bf{H}}_{m}^{(\imath)}({{\bf{q}}_k}[n],H_k){\bf{W}}_{l,i}[n] )} }  + \sum\limits_{l \in {\cal M}} {\rm{tr}}({\bf{H}}_{m}^{(\imath)}({{\bf{q}}_k}[n],H_k) {\bf{R}}_l[n] ) + \sigma _c^2}}}.
\end{align}
%\hrulefill
%\vspace{-5ex}
\end{figure*}

\begin{itemize}
\item {\bf{Type-I receivers}} are legacy users, such that they are designed for communication only and not for ISAC systems. These receivers are not able to cancel the interference generated by dedicated sensing signals $\{{{\bf{s}}_m}[n]\}$. We define ${\bf{H}}_{m}^{(\imath)}({{\bf{q}}_k}[n],H_k)= {{\bf{h}}_{m}^{(\imath)}}({{\bf{q}}_k}[n],H_k) {{\bf{h}}_{m}^{(\imath)H}}({{\bf{q}}_k}[n],H_k)$, $\imath \in \{x,z\}$ and ${{\bf{W}}_{m,k}}[n] = {{\bf{w}}_{m,k}}[n]{\bf{w}}_{m,k}^H[n]$ with ${\bf W}_{m,k}[n] \succeq 0$ and ${\text {rank}}({\bf W}_{m,k}[n]) \le 1$. Accordingly, the signal-to-interference-plus-noise (SINR) with UAV $k$ communicating with GBS $m$ is denoted by ${\gamma ^{(\imath ,{\rm{I}})}_{m,k}} ({{\bf{W}}_{m,k}}[n], {{\bf{R}}_m}[n],{{\bf{q}}_k}[n])$ in \eqref{SINR} at the top of the next page, where index $\imath \in \{x,z\}$ indicates the antenna configuration and $\sigma ^2$ denotes the noise power.

\item {\bf{Type-II receivers}} are dedicatedly designed for ISAC systems. These receivers are able to cancel the interference generated by dedicated sensing signals $\{{{\bf{s}}_m}[n]\}$ prior to decoding the desired information signals $\{{{s}_{m,k}}[n]\}$. Accordingly, the SINR with UAV $k$ communicating with GBS $m$ is denoted by ${\gamma_{m,k} ^{(\imath ,{\rm{II}})}} ({{\bf{W}}_{m,k}}[n], {{\bf{R}}_m}[n],{{\bf{q}}_k}[n])$, which is expressed as \eqref{SINRII} at the top of the next page.
\end{itemize}

\begin{figure*}
\begin{align} \label{SINRII}
{\gamma ^{(\imath ,{\rm{II}})}} ({{\bf{W}}_{m,k}}[n], {{\bf{R}}_m}[n],{{\bf{q}}_k}[n]) = {\frac{{{{\rm{tr}}( {\bf{H}}_{m}^{(\imath)}({{\bf{q}}_k}[n],H_k){\bf{W}}_{m,k}[n] )} }}{{\sum\limits_{(l,i) \ne (m,k)} { {{\rm{tr}}( {\bf{H}}_{m}^{(\imath)}({{\bf{q}}_k}[n],H_k){\bf{W}}_{l,i}[n] )} }   + \sigma _c^2}}}.
\end{align}
\vspace{-3ex}
\end{figure*}
%\vspace{-3ex}
%For notational convenience, we define four networked ISAC cases depending on GBS antenna configuration and UAV receiver types in the following.

Accordingly, we denote ${r_{m,k}^{(\imath,\jmath)}}({{\bf{W}}_{m,k}}[n], {{\bf{R}}_m}[n],{{\bf{q}}_k}[n]) = {\log _2}(1 + {\gamma_{m,k} ^{(\imath ,\jmath)}} ({{\bf{W}}_{m,k}}[n], {{\bf{R}}_m}[n],{{\bf{q}}_k}[n]) ) $ as the achievable rate if UAV $k$ is associated to GBS $m$ in slot $n$. Then the achievable sum rate of the $K$ authorized UAVs in slot $n$ is
\begin{align}
&R^{(\imath ,\jmath)}({{\bf{w}}_{m,k}}[n], {{\bf{R}}_m}[n], {{\bf{q}}_k}[n],{{\alpha} _{m,k}}[n]) \nonumber \\
&= \sum\limits_{m \in {\cal M}}\sum\limits_{k \in {\cal K}} {\alpha _{m,k}}[n]{r_{m,k}^{(\imath ,\jmath)}}({{\bf{W}}_{m,k}}[n], {{\bf{R}}_m}[n],{{\bf{q}}_k}[n]), \nonumber\\
&~~~~ \imath\in \{x,z\}, \jmath \in \{{\text{I}},{\text{II}}\} . \label{sumrate:case1}
\end{align}

\subsection{Sensing Model}

Next, we consider radar sensing towards the targeted 3D area. For facilitating the ISAC design, we focus on  $Q$ sampled locations within the corresponding region as representative sensing points, each of which has an altitude of $H_q$ and a horizontal location of ${\bf v}_q$, $q\in {\cal Q} \buildrel \Delta \over = \{ 1, \ldots ,Q\}$. Based on \eqref{aod:x} and \eqref{aod:z}, we obtain the AoDs from GBS $m$ toward sensing point $q$ as $\theta^{(x)}_m ({{\bf{v}}_q}[n], H_k) $ and $\theta^{(z)}_m ({{\bf{v}}_q}[n], H_k)$ with horizontally and vertically placed ULA antennas, respectively. Accordingly, we obtain the sensing steering vectors as ${{{\bf{a}}}_{m}^{(\imath)}} ({{\bf{v}}_q}[n], H_k)$ defined in \eqref{steer:1}, where $\imath \in \{x,z\}$.

We use the illumination (or received) signal power at the interested sensing locations as the sensing performance metric\footnote{Notice that in practice, enhancing the illumination signal power at given sensing locations generally leads to improved sensing performance in terms of, e.g., detection probability, sensing signal-to-noise ratio (SNR), and target parameters estimation error \cite{Lyu2022Joint, hua2023optimal}. Therefore, the illumination signal power is a proper sensing performance indicator for networked ISAC.}. For target sensing location $q$, the illumination power or the received power from the $M$ GBSs at slot $n$ with antenna configuration $\imath\in\{x,z\}$ is
\begin{align}
 &{\zeta _q^{(\imath)}} ({{\bf{w}}_{l,i}}[n],{{\bf{R}}_l}[n]) = \sum\limits_{l \in {\cal M}}   {{\bf{a}}_l^{(\imath)H}}({{\bf{v}}_q}[n]) (\sum\limits_{i \in {\cal K}} {{{\bf{W}}_{l,i}}[n]} \nonumber \\
&~~~~~~~~ + {{\bf{R}}_l}[n]) {{\bf{a}}^{(\imath)}_l}({{\bf{v}}_q}[n]) /{d_{l,q}}^2.
\end{align}
%\vspace{-3ex}

\subsection{Problem Formulation}
We aim to maximize the average sum rate of $K$ authorized UAVs over the $N$ time slots, by jointly optimizing the coordinated transmit beamforming $\{ {\bf{W}}_{l,i}[n]\}$ and $\{ {\bf{R}}_{l}[n]\}$ at GBSs, the trajectory design $\{{\bf{q}}_k[n]\}$ of authorized UAVs, and the UAV-GBS association $\{\alpha _{m,k}[n]\}$, subject to the minimum illumination power constraints towards the targeted 3D area, the UAV flight constraints, and the maximum transmit power constraints at GBSs. In particular, we assume that the initial and final locations of each UAV $k,\in {\cal K}$ are fixed to be ${\bf{q}}_k[1] = {\bf{q}}_k^{{\rm{I}}}$ and ${\bf{q}}_k[N] = {\bf{q}}_k^{{\rm{F}}}, \forall k \in {\cal K}$, respectively. Also, the UAVs' flights are subject to their individual maximum speed constraints and the collision avoidance constraints, i.e.,
\begin{align}
\left\| {{{\bf{q}}_k}[n + 1] - {{\bf{q}}_k}[n]} \right\| \le {V_{\max }}{\Delta _t}, \forall k \in {\cal K}, n \in {\cal N},
\end{align}
\begin{align}
&\left\| {{{\bf{q}}_k}[n] - {{\bf{q}}_i}[n]} \right\|^2 + (H_k -H_i)^2 \ge {D_{\min }^2}, \nonumber \\
&~~~~~~~~~~~~~~~~~~\forall k ,i \in {\cal K}, k \ne i,n \in {\cal N},
\end{align}
where $V_{\max}$ denotes the maximum UAV speed, and ${D_{\min }}$ denotes the minimum distance between any two UAVs for collision avoidance. Furthermore, to ensure the sensing requirements, we suppose that the illumination power towards each targeted sensing location $q$ should be no less than a given threshold $\Gamma$.

Note that depending on the antenna configuration and the type of UAV receivers employed, there are four different cases under our consideration. We represent them as {\textit {Cases 1-4}} for ease of description.
\begin{itemize}
\item {\textit{Case 1}}: Horizontally placed GBS antennas with $\imath = x$ and Type-I UAV receivers with $\jmath = {\text I}$;
\item {\textit{Case 2}}: Horizontally placed GBS antennas with $\imath = x $ and Type-II UAV receivers with $\jmath = {\text{II}}$;
\item {\textit{Case 3}}: Vertically placed GBS antennas with $\imath = z$ and Type-I UAV receivers with $\jmath = {\text I}$;
\item {\textit{Case 4}}: Vertically placed GBS antennas with $\imath = z$ and Type-II UAV receivers with $\jmath = {\text{II}}$.
\end{itemize}

For {\textit{Case 1}}, the joint coordinated transmit beamforming and UAV trajectory optimization problem for the LAE-oriented networked ISAC system is formulated as
\begin{subequations}
\begin{align}
&( {\rm{P1}} ):  \mathop {\max }\limits_{\{ {{\bf{W}}_{m,k}}[n],{{\bf{R}}_m}[n],{{\bf{q}}_k}[n], {\alpha _{m,k}}[n]\} }  \nonumber \\
&\sum\limits_{n \in {\cal N}} \sum\limits_{m \in {\cal M}}\sum\limits_{k \in {\cal K}} {\alpha _{m,k}}[n]{r_{m,k}^{(x,{\text I})}}({{\bf{W}}_{m,k}}[n], {{\bf{R}}_m}[n],{{\bf{q}}_k}[n]) \label{p1:obj} \\
&~~~~{\rm{s.t.}}  \sum\limits_{l \in {\cal M}} {\bf{a}}_l^{(x )H}({{\bf{v}}_q}[n])(\sum\limits_{i \in {\cal K}} {{{\bf{W}}_{l,i}}[n]}  + {{\bf{R}}_l}[n]) \nonumber \\
&~~~~~~ \cdot {\bf{a}}_l^{(x )} ({{\bf{v}}_q}[n])/{d_{l,q}}^2   \ge \Gamma  , \forall q, n, \label{P2:con:sen} \\
&~~~~~~\sum\limits_{i \in {\cal K}} {\rm{tr}}\left( {{\bf{W}}_{m,i}}[n] \right)  + {\rm{tr}}\left( {{\bf{R}}_m}[n] \right) \le {P_{\max }},\forall m, n, \label{P2:con:pow} \\
&~~~~~~{\bf{q}}_k[1] = {\bf{q}}_k^{{\rm{I}}}, \forall k,\label{p1:ini} \\
&~~~~~~{\bf{q}}_k[N] = {\bf{q}}_k^{{\rm{F}}}, \forall k,\label{p1:end} \\
&~~~~~~\left\| {{{\bf{q}}_k}[n + 1] - {{\bf{q}}_k}[n]} \right\|^2 \le ({V_{\max }}{\Delta _t})^2, \forall k, n, \label{p1:speed} \\
&~~~~~~\left\| {{{\bf{q}}_k}[n] - {{\bf{q}}_i}[n]} \right\|^2 + (H_k -H_i)^2 \ge {D_{\min }^2}, \nonumber \\
&~~~~~~\forall k,i,n, k \ne i, \label{p1:crash} \\
&~~~~~~{\alpha}_{m,k}[n] \in \left\{ {0,1} \right\},  \forall m, k , n, \label{p1:asso1} \\
&~~~~~~\sum\limits_{l \in {\cal M}} {{{\alpha}_{l,k}}}[n]= 1, \forall k, n, \label{p1:asso2} \\
&~~~~~~ {\bf{W}}_{m,k}[n]\succeq 0,{{\bf{R}}_m}[n]\succeq 0, \forall m,k,n, \label{P2:con:semi} \\
&~~~~~~{\rm{rank}}\left( {\bf{W}}_{m,k}[n] \right) \le 1, \forall m , k. \label{P2:con:rank1}
\end{align}
\end{subequations}
Similarly, for {\textit {Case 2}}, we formulate the joint coordinated transmit beamforming and UAV trajectory optimization problem as (P2), which can be obtained based on (P1) by replacing ${r_{m,k} ^{(x ,{\rm{I}})}} ({{\bf{w}}_{m,k}}[n], {{\bf{R}}_m}[n],{{\bf{q}}_k}[n])$ as ${r_{m,k} ^{(x ,{\rm{II}})}} ({{\bf{w}}_{m,k}}[n], {{\bf{R}}_m}[n],{{\bf{q}}_k}[n])$. For {\textit{Case 3}} and {\textit{Case 4}}, we have problems (P3) and (P4), which can be obtained based on (P1) and (P2), respectively, by replacing $\theta^{(x)}_m ({{\bf{q}}_k}[n],H_k)$ as $\theta^{(z)}_m ({{\bf{q}}_k}[n],H_k)$.
%\begin{figure*}
%\begin{align} \label{sumrate:SDR}
%%\setlength\abovedisplayskip{3pt}%shrink space
%%\setlength\belowdisplayskip{3pt}
%{r^{( {\textit{1}})}}({{\bf{W}}_{m,k}}[n], {{\bf{R}}_m}[n],{{\bf{q}}_k}[n]) = {\log _2}(1+ {\frac{{{{\rm{tr}}( {\bf{H}}_{m}^{(x)}({{\bf{q}}_k}[n]){\bf{W}}_{m,k}[n] )} }}{{\sum\limits_{(l,i) \ne (m,k)} { {{\rm{tr}}( {\bf{H}}_{m}^{(x)}({{\bf{q}}_k}[n]){\bf{W}}_{l,i}[n] )} }  + \sum\limits_{l \in {\cal M}} {\rm{tr}}({\bf{H}}_{m}^{(x)}({{\bf{q}}_k}[n]) {\bf{R}}_l[n] ) + \sigma _c^2}}} ).
%\end{align}
%%\hrulefill
%\end{figure*}

Note that problems (P1)-(P4) are mixed-integer non-convex problems that are difficult to solve in general. Take (P1) as an example. The sum rate objective function in \eqref{p1:obj} is highly non-concave, which is due to the fact that the variables $\{ {\bf{w}}_{l,i}[n]\}$, $\{ {\bf{R}}_{l}[n]\}$, $\{{\bf{q}}_k[n]\}$, and $\{\alpha _{m,k}[n]\}$ are coupled, and the trajectory variables $\{{\bf{q}}_k[n]\}$ are involved in the channel matrices ${\bf{H}}_{m}^{(\imath)}({{\bf{q}}_k}[n])$. Moreover, the collision avoidance constraints in \eqref{p1:crash} and the binary UAV association constraints in \eqref{p1:asso1} are also non-convex.

In the following, we first focus on solving the particular problem (P1) in Section \ref{Sec:solution} for {\textit{Case 1}}, and then present the solutions to the other three problems of (P2)-(P4) for the other three cases in \ref{sec:other}.

\section{Proposed Solution to Problem (P1) for {\textit{Case 1}}} \label{Sec:solution}

In this section, we propose an efficient algorithm to solve problem (P1) by using AO. In particular, we alternately optimize the UAV association $\{ \alpha_{m,k}[n]\}$, transmit beamforming $\{{{\bf{W}}_{m,k}}[n]\}$, $\{{{\bf{R}}_{m}}[n]\}$ at GBSs, and UAV trajectory $\{{{\bf{q}}_k}[n]\}$ via using the techniques of SCA and SDR.

\subsection{UAV-GBS Association Optimization} \label{subsec:asso}
In this subsection, we optimize the UAV-GBS association $\{{\alpha _{m,k}}[n]\}$ under given trajectory $\{{\bf q}_k[n]\}$ and transmit beamforming $\{{\bf{W}}_{m,k}[n]\}$ and $\{{\bf{R}}_m[n]\}$. As such, the UAV-GBS association optimization problems is expressed as
%\begin{subequations}
\begin{align}
&( {\rm{P1.1}} ):\nonumber \\
&\mathop {\max }\limits_{\{ {\alpha _{m,k}}[n]\} } \sum\limits_{n \in {\cal N}} \sum\limits_{m \in {\cal M}}\sum\limits_{k \in {\cal K}} {\alpha _{m,k}}[n]{r_{m,k}^{(x,{\text I})}}({{\bf{W}}_{m,k}}[n], {{\bf{R}}_m}[n],{{\bf{q}}_k}[n]) \nonumber \\
& ~~~~~~~~~~~~{\rm{s.t.}}~ ~\eqref{p1:asso1} ~{\text{and}}~\eqref{p1:asso2}. \nonumber
\end{align}
%\end{subequations}

Although problem (P1.1) is an integer program, it can be optimally solved by exploiting its special structure. In particular, for any UAV $k$ at slot $n$, it follows that the optimal solution of $\{\alpha_{m,k}[n]\}$ is obtained by ${\alpha^ * _{{m^ * },k}}[n] = 1$ and ${\alpha^ * _{m,k}}[n] = 0,\forall m \ne {m^ * }$, where ${m^ * } = \arg \mathop {\max }\limits_{\{ m\in {\cal M}\} } {r^{(x,{\text I})}} ({{\bf{W}}_{m,k}}[n], {{\bf{R}}_m}[n],{{\bf{q}}_k}[n])$. Notice that there can be multiple GBSs achieving the same maximum data rate. In this case, we can choose any arbitrary one as $m^*$ without loss of optimality.

\subsection{Transmit Beamforming Optimization}\label{subsec:bf}
Next, we optimize the transmit beamforming $\{{\bf{W}}_{m,k}[n]\}$ and $\{{\bf{R}}_m[n]\}$ under given UAV trajectory $\{{\bf q}_k[n]\}$ and UAV association $\{{\alpha _{m,k}}[n]\}$, for which the optimization problem is
%\begin{subequations}
\begin{align}
&\left( {\rm{P1.2}} \right):\mathop {\max }\limits_{\{ {\bf{W}}_{m,k}[n],{{\bf{R}}_m}[n]\} } \nonumber \\
&\sum\limits_{n \in {\cal N}} \sum\limits_{m \in {\cal M}}\sum\limits_{k \in {\cal K}} {\alpha _{m,k}}[n]{r_{m,k}^{(x,{\text I})}}({{\bf{W}}_{m,k}}[n], {{\bf{R}}_m}[n],{{\bf{q}}_k}[n]) \nonumber \\
&~~~~~~~~{\rm{s.t.}}~\eqref{P2:con:sen},~\eqref{P2:con:pow}, ~\eqref{P2:con:semi}, ~{\text{and}}~\eqref{P2:con:rank1}. \nonumber
\end{align}
%\end{subequations}

Note that problem (P1.2) is non-convex due to the non-concavity of ${r_{m,k}^{(x,{\text I})}}({{\bf{W}}_{m,k}}[n], {{\bf{R}}_m}[n],{{\bf{q}}_k}[n])$. To address this issue, we adopt the SCA technique to approximate the non-convex objective functions of (P1.2) as a series of concave ones. Consider each SCA iteration $o\ge 1$, in which the local point is denoted by $\{{\bf{W}}_{l,i}^{(o)}[n]\}$ and $\{{\bf{R}}_l^{(o)}[n]\}$. In this case, we first approximate ${r^{(x,{\text I})}}({{\bf{W}}_{m,k}}[n], {{\bf{R}}_m}[n],{{\bf{q}}_k}[n])$ as its lower bound, i.e.,
\begin{align} \label{rate:SCA}
&{r_{m,k}^{(x,{\text I})}}({{\bf{W}}_{m,k}}[n], {{\bf{R}}_m}[n],{{\bf{q}}_k}[n]) \nonumber\\
&\ge {\log _2}  ( {\sum\limits_{l \in {\cal M}} \sum\limits_{i \in {\cal K}} {\rm{tr}}( {\bf{H}}_{m}^{(x)}({{\bf{q}}_k}[n],H_k){\bf{W}}_{l,i}[n] )}  \nonumber\\
&  {+ \sum\limits_{l \in {\cal M}} {\rm{tr}}({\bf{H}}_{m}^{(x)}({{\bf{q}}_k}[n],H_k) {\bf{R}}_l[n])  + \sigma ^2} )- a_{m,k}^{(x,{\text{I}},o)}[n] \nonumber \\
&- \sum\limits_{(l,i) \ne (m,k)} {\rm{tr}}({\bf{B}}_{m}^{(x,{\text{I}},o)} ({{\bf{q}}_k}[n],H_k) \cdot ({\bf{W}}_{l,i}[n] - {\bf{W}}_{l,i}^{(o)}[n])) \nonumber \\
&- \sum\limits_{l\in {\cal M}} {\rm{tr}}({\bf{B}}_{m}^{(x,{\text{I}},o)} ({{\bf{q}}_k}[n],H_k) \cdot ({\bf{R}}_l[n] - {\bf{R}}_l^{(o)}[n]))\nonumber \\
& \triangleq {\bar r}_{m,k}^{(x,{\text I},o)}({{\bf{W}}_{m,k}}[n], {{\bf{R}}_m}[n],{{\bf{q}}_k}[n]),
\end{align}
where ${\bf{B}}_{m}^{(x,{\text{I}},o)} ({{\bf{q}}_k}[n],H_k)$ is defined in \eqref{deri:SCA} at the top of the next page and
\begin{figure*}
\begin{align} \label{deri:SCA}
{\bf{B}}_{m}^{(x,{\text{I}},o)}({{\bf{q}}_k}[n],H_k) = \frac{{\log }_2e \cdot {\bf{H}}_{m}^{(x)}({{\bf{q}}_k}[n],H_k)} {{\sum\limits_{(l,i) \ne (m,k)}  {{\rm{tr}}( {\bf{H}}_{m}^{(x)}({{\bf{q}}_k}[n],H_k) {\bf{W}}_{l,i}^{(o)}[n] )}   + \sum\limits_{l \in {\cal M}} {\rm{tr}}( {\bf{H}}_{m}^{(x)}({{\bf{q}}_k}[n],H_k){\bf{R}}_l^{(o)}[n] )  + \sigma ^2}}.
\end{align}
%\hrulefill
\vspace{-3ex}
\end{figure*}
\begin{align}\label{deri:SCAa}
a_{m,k}^{(x,{\text{I}},o)}[n]&= {\log _2} ( {\sum\limits_{(l,i) \ne (m,k)}  {\rm{tr}}({\bf{H}}_{m}^{(x)}({{\bf{q}}_k}[n],H_k) {\bf{W}}_{l,i}^{(o)}[n])} \nonumber \\
&  {+ \sum\limits_{l \in {\cal M}} {\rm{tr}}({\bf{H}}_{m}^{(x)}({{\bf{q}}_k}[n],H_k) {\bf{R}}_l^{(o)}[n]) + \sigma ^2} ).
\end{align}

As a result, we substitute ${ r}_{m,k}^{(x,{\text I})}({{\bf{W}}_{m,k}}[n], {{\bf{R}}_m}[n],{{\bf{q}}_k}[n])$ as ${\bar r}_{m,k}^{(x,{\text I},o)}({{\bf{W}}_{m,k}}[n], {{\bf{R}}_m}[n],{{\bf{q}}_k}[n])$ in the objective functions of (P1.2) and thus obtain the approximated transmit beamforming optimization problem as (P1.3.$o$) in the $o$-th SCA iteration, i.e.,
%\begin{subequations}
\begin{align}
&( {\rm{P1.3.}}o):\mathop {\max }\limits_{\{ {\bf{W}}_{m,k}[n] ,{{\bf{R}}_m}[n]\} } \nonumber \\
& \sum\limits_{n \in {\cal N}} \sum\limits_{m \in {\cal M}}\sum\limits_{k \in {\cal K}} {\alpha _{m,k}}[n]{ \bar r_{m,k}^{(x,{\text{I}},o)}}({{\bf{W}}_{m,k}}[n], {{\bf{R}}_m}[n],{{\bf{q}}_k}[n]) \nonumber \\
&~~~~~~~~{\rm{s.t.}}~\eqref{P2:con:sen}, ~\eqref{P2:con:pow}, ~\eqref{P2:con:semi},  ~{\text{and}}~\eqref{P2:con:rank1}. \nonumber
\end{align}
%\end{subequations}

However, the rank-one constraints in  \eqref{P2:con:rank1} make problem (P1.3.$o$) still non-convex. To address this issue, we omit the rank-one constraints and obtain the SDR version of (P1.3.$o$) as (SDR1.3.$o$), which are convex and can be optimally solved by standard convex optimization solvers such as CVX \cite{grant2014cvx}. Let $  \{ {\bf{W}}_{m,k}^{ * }[n] \}$ and $\{ {{\bf{R}}_m^ * }[n] \} $  denote the optimal solution to (SDR1.3.$o$). Notice that the obtained solutions of $\{ {{\bf{W}}_{m,k}^ * }[n] \}$ are normally with high ranks and thus may not satisfy the rank-one constraints in \eqref{P2:con:rank1}. We then provide the following propositions to reconstruct equivalent rank-one solutions to problem (P1.3.$o$).

\begin{prop}
If the optimal solutions of $ \{ {\bf{W}}_{l,i}^ {*}[n] \} $ to problem (SDR1.3.$o$) are not rank-one, then we reconstruct the equivalent solutions to  problem (P1.3.$o$) as $ \{ {{\bf{\bar W}}}_{l,i}[n]\}$ and $\{ {\bf{\bar R}}_l[n]\} $, given by
\begin{subequations}\label{app:1}
\begin{align}
{\bf{\bar w}}_{l,i}[n] &= ( {{\bf{h}}_{m}^{(x)H}({{\bf{q}}_k}[n],H_k){\bf{W}}_{l,i}^{*}[n] {{\bf{h}}_{m}^{(x)}({{\bf{q}}_k}[n],H_k)}} )^{ -  \frac{1}{2}}\nonumber\\
&~~~~\cdot{\bf{W}}_{l,i}^{*}[n]{{\bf{h}}_{m}^{(x)} ({{\bf{q}}_k}[n],H_k)}, \label{app:1:1}  \\
{{{\bf{\bar W}}}_{l,i}[n]} &= {{\bf{\bar w}}_{l,i}[n]}{\bf{\bar w}}_{l,i}^H[n], \label{app:1:2} \\
{\bf{\bar R}}_l[n] &= \sum\limits_{i \in {\cal K}} {{\bf{W}}_{l,i}^{*}}[n]  + {\bf{R}}_l^{*}[n] - \sum\limits_{i \in {\cal K}} {\bf{\bar W}}_{l,i}[n], \label{app:1:3}
\end{align}
\end{subequations}
which satisfy the rank-one constraints and are feasible for problem (P1.3.$o$). The equivalent solutions in \eqref{app:1} achieve the same objective value for (P1.3.$o$) as the optimal value achieved by $ \{ {\bf{W}}_{m,k}^{ * }[n] \}$ and $\{ {{\bf{R}}_m^ * }[n] \} $ for problem (P1.3.$o$). Therefore, $\{{\bf{\bar w}}_{l,i}[n] \}$ and $\{{\bf{\bar R}}_l[n]\}$ are optimal for (P1.3.$o$) and thus the SDR of (P1.3.$o$) is tight.
\end{prop}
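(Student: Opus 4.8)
The plan is to verify, in order, the three assertions bundled into the proposition — that the reconstructed $\{{\bf\bar W}_{l,i}[n]\}$ are rank-one, that $\{{\bf\bar W}_{l,i}[n],{\bf\bar R}_l[n]\}$ is feasible for (P1.3.$o$), and that it attains the same objective value — working from a fixed optimal solution $\{{\bf W}_{l,i}^{*}[n]\}$, $\{{\bf R}_l^{*}[n]\}$ of the relaxation (SDR1.3.$o$). The rank-one claim is immediate: by \eqref{app:1:2}, ${\bf\bar W}_{l,i}[n]={\bf\bar w}_{l,i}[n]{\bf\bar w}_{l,i}^H[n]$ is rank-one and automatically satisfies ${\bf\bar W}_{l,i}[n]\succeq{\bf 0}$. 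Thus the only genuine content in feasibility concerns the constraints \eqref{P2:con:sen}, \eqref{P2:con:pow}, and the semidefiniteness ${\bf\bar R}_l[n]\succeq{\bf 0}$ in \eqref{P2:con:semi}.

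First I would establish the single algebraic identity that drives everything: the reconstruction leaves each GBS's aggregate transmit covariance unchanged. Summing the definition \eqref{app:1:3} gives
\[
\sum_{i\in{\cal K}}{\bf\bar W}_{l,i}[n]+{\bf\bar R}_l[n]=\sum_{i\in{\cal K}}{\bf W}_{l,i}^{*}[n]+{\bf R}_l^{*}[n]\triangleq{\bf X}_l[n].
\]
Since the sensing constraint \eqref{P2:con:sen} depends on the variables only through ${\bf a}_l^{(x)H}({\bf v}_q[n])\,{\bf X}_l[n]\,{\bf a}_l^{(x)}({\bf v}_q[n])$ and the power constraint \eqref{P2:con:pow} only through ${\rm tr}({\bf X}_l[n])$, both are inherited verbatim from the feasibility of the optimal SDR solution.

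Next I would record the second key identity, preservation of the served-link channel gain. For the served pair write ${\bf h}\triangleq{\bf h}_m^{(x)}({\bf q}_k[n],H_k)$ and $c\triangleq{\bf h}^H{\bf W}_{l,i}^{*}[n]{\bf h}\ge 0$ (if $c=0$ the beamformer carries no desired power and is set to ${\bf 0}$). From \eqref{app:1:1} and the Hermitian property of ${\bf W}_{l,i}^{*}[n]$ one gets ${\bf\bar w}_{l,i}^H[n]{\bf h}=c^{-1/2}{\bf h}^H{\bf W}_{l,i}^{*}[n]{\bf h}=c^{1/2}$, hence ${\bf h}^H{\bf\bar W}_{l,i}[n]{\bf h}=|c^{1/2}|^2={\bf h}^H{\bf W}_{l,i}^{*}[n]{\bf h}$. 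Combining this with covariance preservation, I would argue that every term of the surrogate rate ${\bar r}_{m,k}^{(x,{\rm I},o)}$ in \eqref{rate:SCA} is unchanged: after aggregating the logarithmic term and the linear correction, the surrogate depends on the variables only through the totals $\{{\rm tr}({\bf H}_m^{(x)}{\bf X}_l[n])\}$ and the desired trace ${\rm tr}({\bf H}_m^{(x)}{\bf W}_{m,k}[n])={\bf h}^H{\bf W}_{m,k}[n]{\bf h}$ (recall ${\bf B}_m^{(x,{\rm I},o)}\propto{\bf H}_m^{(x)}$), and all of these are preserved. Therefore the objective of (P1.3.$o$) is identical for the two solutions.

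The main obstacle is the positive-semidefiniteness of ${\bf\bar R}_l[n]$. Using the covariance identity I would rewrite ${\bf\bar R}_l[n]={\bf R}_l^{*}[n]+\sum_{i\in{\cal K}}\bigl({\bf W}_{l,i}^{*}[n]-{\bf\bar W}_{l,i}[n]\bigr)$; since ${\bf R}_l^{*}[n]\succeq{\bf 0}$, it suffices to prove the termwise bound ${\bf W}_{l,i}^{*}[n]-{\bf\bar W}_{l,i}[n]\succeq{\bf 0}$. Substituting \eqref{app:1:1}--\eqref{app:1:2}, this reduces to
\[
{\bf W}^{*}-\frac{{\bf W}^{*}{\bf h}{\bf h}^H{\bf W}^{*}}{{\bf h}^H{\bf W}^{*}{\bf h}}\succeq{\bf 0},
\]
which I would prove by Cauchy--Schwarz in the inner product induced by ${\bf W}^{*}\succeq{\bf 0}$, namely $|{\bf z}^H{\bf W}^{*}{\bf h}|^2\le({\bf z}^H{\bf W}^{*}{\bf z})({\bf h}^H{\bf W}^{*}{\bf h})$ for all ${\bf z}$ — equivalently a Schur-complement condition on the positive semidefinite block matrix
\[
\begin{bmatrix}{\bf W}^{*} & {\bf W}^{*}{\bf h}\\ {\bf h}^H{\bf W}^{*} & {\bf h}^H{\bf W}^{*}{\bf h}\end{bmatrix}\succeq{\bf 0}.
\]
Finally, since the reconstructed solution is feasible and rank-one for (P1.3.$o$) while attaining the optimal value of its relaxation (SDR1.3.$o$) — which is necessarily an upper bound on the optimum of (P1.3.$o$) — it is optimal for (P1.3.$o$), so the relaxation is tight.
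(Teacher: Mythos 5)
Your proposal is correct and follows essentially the same route as the paper's proof: preservation of each GBS's aggregate covariance $\sum_{i}{\bf \bar W}_{l,i}[n]+{\bf \bar R}_l[n]=\sum_{i}{\bf W}^{*}_{l,i}[n]+{\bf R}^{*}_l[n]$ gives feasibility of \eqref{P2:con:sen} and \eqref{P2:con:pow}, preservation of the served-link gain ${\bf h}^H{\bf \bar W}_{m,k}[n]{\bf h}={\bf h}^H{\bf W}^{*}_{m,k}[n]{\bf h}$ (the paper's \eqref{app:1:4}) plus ${\bf B}_m^{(x,{\rm I},o)}\propto{\bf H}_m^{(x)}$ gives invariance of the surrogate objective, and the upper-bound property of the relaxation yields tightness. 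One welcome addition on your side: you explicitly establish ${\bf \bar R}_l[n]\succeq{\bf 0}$ via the termwise bound ${\bf W}^{*}-{\bf W}^{*}{\bf h}{\bf h}^H{\bf W}^{*}/({\bf h}^H{\bf W}^{*}{\bf h})\succeq{\bf 0}$ (Cauchy--Schwarz/Schur complement), a step the paper's appendix only asserts with ``it can be verified.''
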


\begin{proof}
See Appendix \ref{appendixA}.
\end{proof}

%\begin{prop}
%If the optimal solutions of $ \{ {\bf{W}}_{l,i}^ {**}[n] \} $ to problem (SDR10.$o$) are not rank-one, then we reconstruct the equivalent solutions to  problem (P10.$o$) as $\{ \{ {{\bf{\tilde W}}}_{l,i}[n]\} ,\{ {\bf{\tilde  R}}_l[n]\} \}$ as
%\begin{subequations}\label{app:2}
%\begin{align}
%&{\bf{\tilde  w}}_{l,i}[n] = ( {{\bf{h}}_{m,k}^H[n]{\bf{W}}_{l,i}^{**}[n]{{\bf{h}}_{m,k}[n]}} )^{ -  \frac{1}{2}}{\bf{W}}_{l,i}^{**}[n]{{\bf{h}}_{m,k}[n]}, \label{app:2:1}  \\
%&{{{\bf{\tilde  W}}}_{l,i}[n]} = {{\bf{\tilde  w}}_{l,i}[n]}{\bf{\tilde  w}}_{l,i}^H[n], \label{app:2:2} \\
%&{\bf{\tilde  R}}_l[n] = \sum\limits_{i \in {\cal K}} {{\bf{W}}_{l,i}^{**}}[n]  + {\bf{R}}_l^{**}[n] - \sum\limits_{i \in {\cal K}} {\bf{\tilde  W}}_{l,i}[n], \label{app:2:3}
%\end{align}
%\end{subequations}
%which satisfy the rank-one constraints and are feasible for problem (P10.$o$). The equivalent solutions in \eqref{app:2} achieve the same value for (P10.$o$) as the optimal value achieved by $ \{ {\bf{W}}_{m,k}^{ ** }[n] \}$ and $\{ {{\bf{R}}_m^ {**} }[n] \} $ for problem (P10.o). Therefore, $\{{\bf{\tilde w}}_{l,i}[n] \}$ and $\{{\bf{\tilde R}}_l[n]\}$ are optimal for (P10.$o$) and the SDR (SDR10.$o$) of (P10.$o$) is tight.
%\end{prop}

%\begin{proof}
%See Appendix \ref{appendixB}.
%\end{proof}

Therefore, in each SCA iteration $o$, we obtain the optimal solution to (P1.3.$o$), which can be shown to lead to monotonically non-decreasing objective values for (P1.2). Hence, the SCA-based solution to problem (P1.2) is ensured to converge.

\subsection{UAV Trajectory Optimization} \label{subsec:tra}
Next, we optimize the UAV trajectory $\{{\bf q}_k[n]\}$ with given transmit beamforming $\{{\bf{W}}_{l,i}[n]\}$, $\{{{\bf{R}}_l}[n]\}$ and UAV-GBS association $\{{ \alpha}_{l,i}[n]\}$. The corresponding optimization problem becomes
%\begin{subequations}
\begin{align}
&( {\rm{P1.4}} ):\nonumber \\
&\mathop {\max }\limits_{\{ {\bf q}_k[n]\} }\sum\limits_{n \in {\cal N}} \sum\limits_{m \in {\cal M}}\sum\limits_{k \in {\cal K}} {\alpha _{m,k}}[n]{r_{m,k}^{(x,{\text I})}}({{\bf{W}}_{m,k}}[n], {{\bf{R}}_m}[n],{{\bf{q}}_k}[n])    \nonumber \\
&~~~~~~~~{\rm{s.t.}}~\eqref{p1:ini}, ~\eqref{p1:end}, ~\eqref{p1:speed}, ~{\text{and}}~\eqref{p1:crash}. \nonumber
\end{align}
%\end{subequations}
Note that in (P1.4), \eqref{p1:crash} are non-convex constraints and the optimization variables $\{{\bf q}_k[n]\}$ appear in the AoD, both of which are difficult to be tackled.

First, we adopt SCA to approximate \eqref{p1:crash}. In each iteration $o$, the first-order Taylor expansion is applied to approximate the left-hand-side of \eqref{p1:crash} with respect to ${\bf{q}}_k^{(o)}[n]$ and ${\bf{q}}_i^{(o)}[n]$, based on which the approximate version of constraint \eqref{p1:crash} is expressed as
\begin{align}
&  2( {\bf{q}}_k^{(o)}[n] - {\bf{q}}_i^{(o)}[n] )^T \left( {\bf{q}}_k[n] - {\bf{q}}_i[n] \right) \nonumber \\
&- || {\bf{q}}_k^{(o)}[n] - {\bf{q}}_i^{(o)}[n] || ^2\ge D_{\min }^2 - (H_k-H_i)^2. \label{SCA:crash}
\end{align}

Next, we reformulate ${r_{m,k}^{(x,{\text I})}} ({{\bf{W}}_{m,k}}[n], {{\bf{R}}_m}[n], {{\bf{q}}_k}[n])$ in the objective function. For ease of derivation, we denote the entries in the $p$-th and $q$-th column of ${\bf{W}}_{l,i}[n]$ and ${\bf{R}}_{l}[n]$ as $[ {\bf{W}}_{l,i}[n] ]_{p,q}$ and $\left[ {\bf{R}}_{l}[n] \right]_{p,q}$, respectively. Similarly, their absolute values are denoted by $| {\left[ {\bf{W}}_{l,i}[n] \right]}_{p,q} |$ and $| {\left[ {\bf{R}}_{l}[n] \right]}_{p,q} |$. Moreover, the phases of these entries are denoted by $\theta _{p,q}^{{\bf{W}}_{l,i}[n]}$ and $\theta _{p,q}^{{\bf{R}}_{l}[n]}$, respectively. It thus follows that
\begin{align}
&{r_{m,k}^{(x,{\text I})}}({{\bf{W}}_{m,k}}[n], {{\bf{R}}_m}[n],{{\bf{q}}_k}[n]) \nonumber \\
&= {\log _2} ( {\sum\limits_{l \in {\cal M}} \sum\limits_{i \in {\cal K}}  {\eta^{(x)}  \left( {\bf{W}}_{l,i}[n],{\bf{q}}_k[n] \right)} } \nonumber \\
& +  \sum\limits_{l \in {\cal M}}\mu^{(x)} \left({\bf{R}}_l[n],{\bf{q}}_k[n]\right)  + \frac{{{\sigma ^2}}}{\kappa }({\left\| {{{\bf{q}}_k}[n] - {{\bf{u}}_m}} \right\|^2} + {H_k^2}) )  \nonumber \\
&- {\log _2} ( \sum\limits_{(l,i) \ne (m,k)} \eta^{(x)} \left( {\bf{W}}_{l,i}[n],{\bf{q}}_k[n] \right)\nonumber \\
& + \sum\limits_{l \in {\cal M}} \mu^{(x)} \left({\bf{R}}_l[n],{\bf{q}}_k[n]\right)  + \frac{{{\sigma ^2}}}{\kappa }({\left\| {{{\bf{q}}_k}[n] - {{\bf{u}}_m}} \right\|^2} + {H_k^2})), \label{traj:rate}
\end{align}
%\begin{align}
%{r_{m,k}^{\text{II}}}[n]&={\log _2} ( \sum\limits_{l \in {\cal M}} \sum\limits_{i \in {\cal K}} {\eta \left( {\bf{W}}_{l,i}[n],{\bf{q}}_k[n] \right)}   \nonumber \\
%& + \frac{{{\sigma ^2}}}{\kappa }({\left\| {{{\bf{q}}_k}[n] - {{\bf{u}}_m}} \right\|^2} + {H_k^2}) ) \nonumber \\
%&  - {\log _2} ( \sum\limits_{(l,i) \ne (m,k)} \eta \left( {\bf{W}}_{l,i}[n],{\bf{q}}_k[n] \right)  \nonumber \\
%& + \frac{{{\sigma ^2}}}{\kappa }({\left\| {{{\bf{q}}_k}[n] - {{\bf{u}}_m}} \right\|^2} + {H_k^2}) ). \label{traj:rateII}
%\end{align}
where $\eta^{(x)} \left({\bf{W}}_{l,i}[n],{\bf{q}}_k[n]\right)$ and  $\mu^{(x)} ({\bf{R}}_l[n],{\bf{q}}_k[n])$ are expressed as
\begin{align}
&\eta^{(x)} \left({\bf{W}}_{l,i}[n],{\bf{q}}_k[n]\right) \nonumber \\
&=  \sum\limits_{r = 1}^{{N_a}} {[ {\bf{W}}_{l,i}[n] ]}_{r,r}  +  2\sum\limits_{p = 1}^{N_a}   \sum\limits_{q = p+1}^{{N_a}}  | {\left[ {\bf{W}}_{l,i}[n] \right]}_{p,q} | \nonumber \\
&\times \cos (\theta _{p,q}^{{{\bf{W}}_{l,i}}[n]} + 2\pi \frac{d}{\lambda }(q - p)\frac{{{{\boldsymbol{\psi }}^T}({{\bf{q}}_k}[n] - {{\bf{u}}_m})}}{\sqrt {{{\left\| {{{\bf{q}}_k}[n] - {{\bf{u}}_m}} \right\|}^2} + H_k^2} }) ,
\end{align}
and
\begin{align}
&\mu^{(x)} ({\bf{R}}_l[n],{\bf{q}}_k[n])\nonumber \\
&=  \sum\limits_{r = 1}^{{N_a}} {\left[ {\bf{R}}_l [n] \right]}_{r,r} + 2\sum\limits_{p = 1}^{N_a} \sum\limits_{q = p+1}^{N_a} | {\left[ {\bf{R}}_l[n] \right]}_{p,q} | \nonumber \\
&\times \cos ( \theta _{p,q}^{{\bf{R}}_{l}[n]} + 2\pi \frac{d}{\lambda }(q - p)\frac{{\boldsymbol{\psi }}^T({{\bf{q}}_k}[n] - {{\bf{u}}_m})}{{\sqrt {{{\left\| {{{\bf{q}}_k}[n] - {{\bf{u}}_m}} \right\|}^2} + H_k^2} } } ).
\end{align}
%\begin{align}
%&\eta^{2} \left({\bf{W}}_{l,i}[n],{\bf{q}}_k[n]\right) \nonumber \\
%&=  \sum\limits_{r = 1}^{{N_a}} {[ {\bf{W}}_{l,i}[n] ]}_{r,r} + 2\sum\limits_{p = 1}^{N_a}   \sum\limits_{q = p+1}^{{N_a}}  \left| {\left[ {\bf{W}}_{l,i}[n] \right]}_{p,q} \right| \nonumber \\
%&\times \cos \left( \theta _{p,q}^{{\bf{W}}_{l,i}[n]}+ 2\pi \frac{d}{\lambda }(q - p)\frac{H_k}{\sqrt{ {{\left\| {{{\bf{q}}_k}[n]  - {{\bf{u}}_m}} \right\|}^2 + {H_k^2}}} } \right) ,
%\end{align}
%and
%\begin{align}
%&\mu^{2} ({\bf{R}}_l[n],{\bf{q}}_k[n])\nonumber \\
%&= \sum\limits_{r = 1}^{{N_a}} {\left[ {\bf{R}}_l [n] \right]}_{r,r}  + 2\sum\limits_{p = 1}^{N_a} \sum\limits_{q = p+1}^{N_a} \left| {\left[ {\bf{R}}_l[n] \right]}_{p,q} \right| \nonumber \\
%&\times \cos \left( \theta _{p,q}^{{\bf{R}}_{l}[n]} + 2\pi \frac{d}{\lambda }(q - p)\frac{H_k}{\sqrt{ {{\left\| {{{\bf{q}}_k}[n] - {{\bf{u}}_m}} \right\|}^2 + {H_k^2}} }} \right).
%\end{align}

In the following, we adopt the first-order Taylor expansion to approximate \eqref{traj:rate} in iteration $o$ as
\begin{align}
&{r^{(x,{\text I})}}({{\bf{W}}_{m,k}}[n], {{\bf{R}}_m}[n],{{\bf{q}}_k}[n])\nonumber \\
& \approx c_{m,k}^{(x,{\text{I}},o)}[n] + {\bf{d}}_{m,k}^{(x,{\text{I}},o)T}[n]({\bf{q}}_k[n] - {\bf{q}}_k^{(o)}[n])\nonumber \\
&\triangleq  {\tilde r^{(x,{\text I},o)}}({{\bf{W}}_{m,k}}[n], {{\bf{R}}_m}[n],{{\bf{q}}_k}[n]) , \label{QoS:linear}
\end{align}
where $c_{m,k}^{(x,{\text{I}},o)}[n] $ and ${\bf{d}}_{m,k}^{(x,{\text{I}},o)}[n]$ are defined in \eqref{SCA:cmk} and \eqref{SCA:dmk} with ${g_{m,k}^{(x,{\text{I}},o)}}[n]$, ${h_{m,k}^{(x,{\text{I}},o)}}[n]$, $\nu^{(x)} ({\bf{W}}_{l,i}[n],{\bf{q}}_k^{(o)}[n])$, and $\upsilon^{(x)} ({\bf{R}}_l[n],{\bf{q}}_k^{(o)}[n])$  expressed in \eqref{SCA:gmk}, \eqref{SCA:hmk}, \eqref{SCA:nu}, and \eqref{SCA:upsilon}, respectively.
\begin{align}\label{SCA:cmk}
&c_{m,k}^{(x,{\text{I}},o)}[n]={\log _2} ( \sum\limits_{l \in {\cal M}} \sum\limits_{i \in {\cal K}} {\eta^{(x)} ( {\bf{W}}_{l,i}[n] , {\bf{q}}_k^{(o)}[n] )} \nonumber \\
&  + \sum\limits_{l \in {\cal M}}  \mu^{(x)} ({\bf{R}}_l[n]  , {\bf{q}}_k^{(o)}[n])+ \frac{{{\sigma ^2}}}{\kappa } ({|| {{\bf{q}}_k^{(o)}[n] - {{\bf{u}}_m}} ||^2} + H_k^2 ))\nonumber \\
& - {\log _2} ( \sum\limits_{(l,i) \ne (m,k)} {\eta^{(x)} ({{\bf{W}}_{l,i}}[n],{\bf{q}}_k^{(o)}[n])}  \nonumber \\
&+ \sum\limits_{l \in {\cal M}}\mu^{(x)} ({\bf{R}}_l[n],{\bf{q}}_k^{(o)}[n]) +  \frac{{{\sigma ^2}}}{\kappa }({ || {{{\bf{q}}_k^{(o)}}[n] - {{\bf{u}}_m}}|| ^2} + {H_k^2})),
\end{align}
\begin{align}\label{SCA:dmk}
&{\bf{d}}_{m,k}^{(x,{\text{I}},o)}[n]= \frac{{ {{\log }_2}e}}{g_{m,k}^{(x,{\text{I}},o)}[n]} (  \sum\limits_{l \in {\cal M}} {\sum\limits_{i \in {\cal K}} \nu^{(x)} ({\bf{W}}_{l,i}[n],{\bf{q}}_k^{(o)}[n]) } \nonumber \\
&  + \sum\limits_{l \in {\cal M}} \upsilon^{(x)} ({{\bf{R}}_l}[n],{\bf{q}}_k^{(o)}[n])  + \frac{{2{\sigma ^2}}}{\kappa }({{\bf{q}}_k}[n] - {{\bf{u}}_m}) )\nonumber \\
& - \frac{ {{\log }_2}e}{h_{m,k} ^{(x,{\text{I}},o)}[n]}  ( \sum\limits_{(l,i) \ne (m,k)} {\nu^{(x)} ({{\bf{W}}_{l,i}}[n],{\bf{q}}_k^{(o)}[n])} \nonumber \\
& + \sum\limits_{l \in {\cal M}} \upsilon^{(x)} ({{\bf{R}}_l}[n],{\bf{q}}_k^{(o)}[n])  + \frac{{2{\sigma ^2}}}{\kappa }({{\bf{q}}_k}[n] - {{\bf{u}}_m}) ) ,
\end{align}
\begin{align}\label{SCA:gmk}
&{g_{m,k}^{(x,{\text{I}},o)}}[n]= \sum\limits_{l \in {\cal M}} \sum\limits_{i \in {\cal K}} \eta^{(x)}( {\bf{W}}_{l,i}[n],{\bf{q}}_k^{(o)}[n] )\nonumber \\
&+  \sum\limits_{l \in {\cal M}} \mu^{(x)} ({\bf{R}}_l[n],{\bf{q}}_k^{(o)}[n])+ \frac{{{\sigma ^2}}}{\kappa }({ || {{{\bf{q}}_k^{(o)}}[n] - {{\bf{u}}_m}} || ^2} + {H_k^2}),
\end{align}
\begin{align}\label{SCA:hmk}
&{h_{m,k}^{(x,{\text{I}},o)}}[n]= \sum\limits_{(l,i) \ne (m,k)} \eta^{(x)} ( {\bf{W}}_{l,i}[n],{\bf{q}}_k^{(o)}[n] )\nonumber \\
&+ \sum\limits_{l \in {\cal M}}\mu^{(x)} ({{\bf{R}}_l[n]},{\bf{q}}_k^{(o)}[n]) + \frac{{{\sigma ^2}}}{\kappa }({\left\| {{{\bf{q}}_k}[n] - {{\bf{u}}_m}} \right\|^2} + {H_k^2}).
\end{align}
%Moreover, based on the different antenna configurations,  $\nu ({\bf{W}}_{l,i}[n],{\bf{q}}_k^{(o)}[n])$ and $\upsilon ({\bf{R}}_l[n],{\bf{q}}_k^{(o)}[n])$ are derived as $\nu^{x} ({\bf{W}}_{l,i}[n],{\bf{q}}_k^{(o)}[n])$, $\upsilon^{x} ({\bf{R}}_l[n],{\bf{q}}_k^{(o)}[n])$, $\nu^{z} ({\bf{W}}_{l,i}[n],{\bf{q}}_k^{(o)}[n])$, and $\upsilon^{z} ({\bf{R}}_l[n],{\bf{q}}_k^{(o)}[n])$, which are given by \eqref{SCA:nu}, \eqref{SCA:upsilon}, \eqref{SCA:nu:z}, and \eqref{SCA:upsilon:z} at the top of the next page, respectively.
\begin{figure*}
\begin{align}\label{SCA:nu}
\nu^{(x)} ({{\bf{W}}_{l,i}}[n],{\bf{q}}_k^{(o)}[n]) &= \sum\limits_{p = 1}^{{N_a}} \sum\limits_{q = p + 1}^{{N_a}}  - 4\pi \frac{d}{\lambda }(q - p)| {{{[{{\bf{W}}_{l,i}}[n]]}_{p,q}}} | \sin ( {\theta _{p,q}^{{{\bf{W}}_{l,i}}[n]} + 2\pi \frac{d}{\lambda }(q - p)\frac{{{{\boldsymbol{\psi }}^T}({\bf{q}}_k^{(o)}[n] - {{\bf{u}}_m})}}{{\sqrt {{{|| {{\bf{q}}_k^{(o)}[n] - {{\bf{u}}_m}} ||}^2} + H_k^2} }}} ) \nonumber \\
&\cdot ( {\frac{{\boldsymbol{\psi }}}{{\sqrt {{{|| {{\bf{q}}_k^{(o)}[n] - {{\bf{u}}_m}} ||}^2} + H_k^2} }} - \frac{{{{\boldsymbol{\psi }}^T}({\bf{q}}_k^{(o)}[n] - {{\bf{u}}_m}) \cdot ({\bf{q}}_k^{(o)}[n] - {{\bf{u}}_m})}}{{{{\sqrt {{{|| {{\bf{q}}_k^{(o)}[n] - {{\bf{u}}_m}} ||}^2} + H_k^2} }^3}}}} )   .
\end{align}
\end{figure*}
\begin{figure*}
\begin{align}\label{SCA:upsilon}
\upsilon^{(x)} ({{\bf{R}}_l}[n],{\bf{q}}_k^{(o)}[n]) &= \sum\limits_{p = 1}^{{N_a}} \sum\limits_{q = p + 1}^{{N_a}}  - 4\pi \frac{d}{\lambda }(q - p)| {{{[{{\bf{R}}_l}[n]]}_{p,q}}} |\sin ( {\theta _{p,q}^{{{\bf{R}}_l}[n]} + 2\pi \frac{d}{\lambda }(q - p)\frac{{{{\boldsymbol{\psi }}^T}({\bf{q}}_k^{(o)}[n] - {{\bf{u}}_m})}}{{\sqrt {{{|| {{\bf{q}}_k^{(o)}[n] - {{\bf{u}}_m}} ||}^2} + H_k^2} }}} )\nonumber \\
&\cdot ( {\frac{{\boldsymbol{\psi }}}{{\sqrt {{{|| {{\bf{q}}_k^{(o)}[n] - {{\bf{u}}_m}} ||}^2} + H_k^2} }} - \frac{{{{\boldsymbol{\psi }}^T}({\bf{q}}_k^{(o)}[n] - {{\bf{u}}_m}) \cdot ({\bf{q}}_k^{(o)}[n] - {{\bf{u}}_m})}}{{{{\sqrt {{{|| {{\bf{q}}_k^{(o)}[n] - {{\bf{u}}_m}} ||}^2} + H_k^2} }^3}}}} )  .
\end{align}
\vspace{-3ex}
\end{figure*}

After approximating the non-convex objective function and the collision avoidance constraints \eqref{SCA:crash} for problem (P1.4), we further introduce additional constraints to ensure the accuracy of our approximation. In particular, we consider a series of trust region constraints in each iteration $o$ \cite{conn2000trust}, i.e.,
\begin{align}
|| {{\bf{q}}_k^{( o)}[n] - {\bf{q}}_k^{( o - 1)}[n]}||  \le {\omega ^{( o)}}, \forall k \in {\cal K}, n \in {\cal N}, \label{trust:region}
\end{align}
where ${\omega ^{( o)}}$ denotes the trust region radius. Thus, we obtain the approximated version of UAV trajectory optimization problems as (P1.5.$o$) in each SCA iteration $o$.
%\begin{subequations}
\begin{align}
&\left( {\rm{P1.5}}.o \right):\nonumber \\
&\mathop {\max }\limits_{\{ {\bf q}_k[n]\} } \!\! \sum\limits_{n \in {\cal N}} \sum\limits_{m \in {\cal M}}\sum\limits_{k \in {\cal K}} {\alpha _{m,k}}[n] {\tilde r_{m,k}^{(x,{\text I},o)}}({{\bf{W}}_{m,k}}[n], {{\bf{R}}_m}[n],{{\bf{q}}_k}[n])  \nonumber \\
&~{\rm{s.t.}}~\eqref{p1:ini}, ~\eqref{p1:end}, ~\eqref{p1:speed}, ~\eqref{SCA:crash},~{\text{and}}~\eqref{trust:region}.
\end{align}
%\end{subequations}

In general, we solve problem (P1.4) by iteratively solving a series of convex problems (P1.5.$o$)'s. The convergence of the SCA-based algorithm with trust region can be ensured by choosing  ${\omega ^{( o)}}$ \cite{conn2000trust} to be sufficiently small. In practice, in each iteration $o$, we reduce the trust region radius by ${\omega ^{(o)}} = \frac{1}{2}{\omega ^{(o)}}$ if the objective value is non-decreasing. The reduction will be terminated when ${\omega ^{(o)}}$ is lower than a threshold $\xi$.

Therefore, the algorithms for solving subproblems (P1.1), (P1.2), and (P1.4) are convergent by obtaining the optimal solution or solving a series of SCA problems. As a result, the whole AO-based algorithm is is ensured to converge.

\section{Proposed Solution to Problems (P2)-(P4)}\label{sec:other}
In this section, we extend our proposed solution of (P1) to solve problems (P2)-(P4) for the other three cases.

\subsection{Solution to Problems (P2) for {\textit{Case 2}}}\label{subsec:p2}
In {\textit{Case 2}}, we consider horizontally placed antennas at GBSs and Type-II UAV receivers. We also adopt AO to solve problem (P2) by iteratively optimizing the UAV-GBS association, transmit beamforming, and UAV trajectory. Notice that the UAV-GBS association optimization problem can be similarly solved as for problem (P1.1). In the following, we only need to focus on the transmit beamforming optimization and UAV trajectory optimization, respectively.

First, we use the SCA to tackle the transmit beamforming optimization problem:
\begin{align}
&\left( {\rm{P2.1}} \right):\mathop {\max }\limits_{\{ {\bf{W}}_{m,k}[n],{{\bf{R}}_m}[n]\} } \nonumber \\
&\sum\limits_{n \in {\cal N}} \sum\limits_{m \in {\cal M}}\sum\limits_{k \in {\cal K}} {\alpha _{m,k}}[n]{r_{m,k}^{(x,{\text {II}})}}({{\bf{W}}_{m,k}}[n], {{\bf{R}}_m}[n],{{\bf{q}}_k}[n]) \nonumber \\
&~~~~~~~~{\rm{s.t.}}~\eqref{P2:con:sen},~\eqref{P2:con:pow}, ~\eqref{P2:con:semi}, ~{\text{and}}~\eqref{P2:con:rank1}. \nonumber
\end{align}

Towards this end, in each SCA iteration o, we approximate ${r_{m,k}^{(x,{\text {II}})}}({{\bf{W}}_{m,k}}[n], {{\bf{R}}_m}[n],{{\bf{q}}_k}[n])$ as
\begin{align} \label{rate:SCAII}
&{r_{m,k}^{(x,{\text {II}})}}({{\bf{W}}_{m,k}}[n], {{\bf{R}}_m}[n],{{\bf{q}}_k}[n]) \nonumber \\
&\ge {\log _2}  ( {\sum\limits_{l \in {\cal M}} \sum\limits_{i \in {\cal K}} {\rm{tr}}( {\bf{H}}_{m}^{(x)}({{\bf{q}}_k}[n],H_k) {\bf{W}}_{l,i}[n] ) + \sigma ^2 } ) \nonumber\\
&- a_{m,k}^{(x,{\text{II}},o)}[n] - \sum\limits_{(l,i) \ne (m,k)} {\rm{tr}}({\bf{B}}_{m}^{(x,{\text{II}},o)} ({{\bf{q}}_k}[n],H_k)\nonumber \\
&\cdot ({\bf{W}}_{l,i}[n] -  {\bf{W}}_{l,i}^{(o)}[n])) \nonumber \\
& \triangleq {\bar r_{m,k}^{(x,{\text {II}},o)}}({{\bf{W}}_{m,k}}[n], {{\bf{R}}_m}[n],{{\bf{q}}_k}[n]),
\end{align}
where ${\bf{B}}_{m}^{(x,{\text{II}},o)} ({{\bf{q}}_k}[n],H_k)$ and $a_{m,k}^{(x,{\text{II}},o)}[n]$ are defined as
\begin{align} \label{deri:SCAII}
&{\bf{B}}_{m,k}^{(x,{\text{II}},o)} ({{\bf{q}}_k}[n],H_k) \nonumber\\
&= \frac{{\log }_2e\cdot{\bf{H}}_{m}^{(x)} ({{\bf{q}}_k}[n],H_k)}{\sum\limits_{(l,i) \ne (m,k)}  {{\rm{tr}}( {\bf{H}}_{m}^{(x)}({{\bf{q}}_k}[n],H_k) {\bf{W}}_{l,i}^{(o)}[n] )} + \sigma ^2},
\end{align}
and
\begin{align}\label{deri:SCAaII}
&a_{m,k}^{(x,{\text{II}},o)}[n]\nonumber \\
&= {\log _2} ( {\sum\limits_{(l,i) \ne (m,k)}  {\rm{tr}}({\bf{H}}_{m}^{(x)}({{\bf{q}}_k}[n],H_k){\bf{W}}_{l,i}^{(o)}[n])+ \sigma ^2} ).
\end{align}
By substituting ${r_{m,k}^{(x,{\text {II}})}}({{\bf{W}}_{m,k}}[n], {{\bf{R}}_m}[n],{{\bf{q}}_k}[n])$ as ${\bar r^{(x,{\text {II}})}}({{\bf{W}}_{m,k}}[n], {{\bf{R}}_m}[n],{{\bf{q}}_k}[n])$ in the objective function in (P2.1), we obtain the approximated transmit beamforming optimization problem as (P2.2.$o$) in the $o$-th SCA iteration, i.e.,
%\begin{subequations}
\begin{align}
&( {\rm{P2.2.}}o):\mathop {\max }\limits_{\{ {\bf{W}}_{m,k}[n] ,{{\bf{R}}_m}[n]\} } \nonumber \\
& \sum\limits_{n \in {\cal N}} \sum\limits_{m \in {\cal M}}\sum\limits_{k \in {\cal K}} {\alpha _{m,k}}[n]{ \bar r_{m,k}^{(x,{\text {II}},o)}}({{\bf{W}}_{m,k}}[n], {{\bf{R}}_m}[n],{{\bf{q}}_k}[n]) \nonumber \\
&~~~~~~~~{\rm{s.t.}}~\eqref{P2:con:sen}, ~\eqref{P2:con:pow}, ~\eqref{P2:con:semi},  ~{\text{and}}~\eqref{P2:con:rank1}. \nonumber
\end{align}
%\end{subequations}
To solve  (P2.2.$o$), we omit the non-convex rank-one constraints \eqref{P2:con:rank1} to obtain (SDR2.2.$o$). We have the following proposition to find the rank-one solution to (P2.2.$o$) by equivalently solving the convex semidefinite programming (SDP) (SDR2.2.$o$).
\begin{prop}
Suppose that the optimal solution to (SDR2.2.$o$) is given by $ \{ {\bf{W}}_{l,i}^ {**}[n] \} $ and $ \{ {\bf{R}}_{l,i}^ {**}[n] \} $, in which $ \{ {\bf{W}}_{l,i}^ {**}[n] \} $ may not be not rank-one in general. Then we reconstruct the equivalent solutions to  problem (P2.2.$o$) as $ \{ {{\bf{\tilde W}}}_{l,i}[n]\}$ and $\{ {\bf{\tilde  R}}_l[n]\} $, given by
\begin{subequations}\label{app:2}
\begin{align}
{\bf{\tilde  w}}_{l,i}[n] &= ( {{\bf{h}}_{m}^{(x)H}({{\bf{q}}_k}[n],H_k){\bf{W}}_{l,i}^{**}[n] {\bf{h}}_{m}^{(x)}({{\bf{q}}_k}[n],H_k)} )^{ -  \frac{1}{2}} \nonumber \\
&~~~~\cdot {\bf{W}}_{l,i}^{**}[n] {{\bf{h}}_{m}^{(x)}({{\bf{q}}_k}[n])}, \label{app:2:1}  \\
{{{\bf{\tilde  W}}}_{l,i}[n]} &= {{\bf{\tilde  w}}_{l,i}[n]}{\bf{\tilde  w}}_{l,i}^H[n], \label{app:2:2} \\
{\bf{\tilde  R}}_l[n] &= \sum\limits_{i \in {\cal K}} {{\bf{W}}_{l,i}^{**}}[n] + {\bf{R}}_l^{**}[n] - \sum\limits_{i \in {\cal K}} {\bf{\tilde  W}}_{l,i}[n], \label{app:2:3}
\end{align}
\end{subequations}
which satisfy the rank-one constraints and are feasible for problem (P2.2.$o$). The equivalent solutions in \eqref{app:2} achieve the same optimal value for (P2.2.$o$) with $ \{ {\bf{W}}_{m,k}^{ ** }[n] \}$ and $\{ {{\bf{R}}_m^ {**} }[n] \} $ as well as (SDR2.2.$o$). Therefore, $\{{\bf{\tilde w}}_{l,i}[n] \}$ and $\{{\bf{\tilde R}}_l[n]\}$ are optimal for (P2.2.$o$) and the SDR (SDR2.2.$o$) is tight.
\end{prop}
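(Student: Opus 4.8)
The plan is to follow the SDR-tightness argument used for Proposition~1 in Appendix~\ref{appendixA}, reusing its reconstruction almost verbatim, and to pinpoint the single place where the Type-II interference structure forces a new step. To lighten notation, for a beamformer $\mathbf{W}_{l,i}[n]$ let $\mathbf{h}$ denote the channel $\mathbf{h}_m^{(x)}(\mathbf{q}_k[n],H_k)$ appearing in \eqref{app:2:1} and set $\mathbf{H}=\mathbf{h}\mathbf{h}^H$.

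First I would establish the three structural facts about the reconstructed variables. Rank: $\tilde{\mathbf{W}}_{l,i}[n]=\tilde{\mathbf{w}}_{l,i}[n]\tilde{\mathbf{w}}_{l,i}^H[n]$ is rank-one by \eqref{app:2:2}, so \eqref{P2:con:rank1} holds automatically. Desired-power invariance: substituting \eqref{app:2:1} gives $\mathbf{h}^H\tilde{\mathbf{w}}_{l,i}[n]=(\mathbf{h}^H\mathbf{W}_{l,i}^{**}[n]\mathbf{h})^{1/2}$, hence $\operatorname{tr}(\mathbf{H}\tilde{\mathbf{W}}_{l,i}[n])=\operatorname{tr}(\mathbf{H}\mathbf{W}_{l,i}^{**}[n])$; this identity preserves each UAV's own signal term. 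Positive semidefiniteness of $\tilde{\mathbf{R}}_l[n]$: writing $\mathbf{W}_{l,i}^{**}[n]=\mathbf{G}\mathbf{G}^H$ with $\mathbf{G}=(\mathbf{W}_{l,i}^{**}[n])^{1/2}$ and $\mathbf{g}=\mathbf{G}^H\mathbf{h}$, I would show $\mathbf{W}_{l,i}^{**}[n]-\tilde{\mathbf{W}}_{l,i}[n]=\mathbf{G}(\mathbf{I}-\mathbf{g}\mathbf{g}^H/(\mathbf{g}^H\mathbf{g}))\mathbf{G}^H\succeq\mathbf{0}$, since the middle factor is an orthogonal projector; summing over $i$ and adding $\mathbf{R}_l^{**}[n]\succeq\mathbf{0}$ then gives $\tilde{\mathbf{R}}_l[n]\succeq\mathbf{0}$ via \eqref{app:2:3}, so \eqref{P2:con:semi} holds.

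Feasibility of the remaining constraints is immediate from \eqref{app:2:3}: the per-GBS total covariance is unchanged, $\sum_i\tilde{\mathbf{W}}_{l,i}[n]+\tilde{\mathbf{R}}_l[n]=\sum_i\mathbf{W}_{l,i}^{**}[n]+\mathbf{R}_l^{**}[n]$, and since both the illumination constraints \eqref{P2:con:sen} and the power constraints \eqref{P2:con:pow} depend on the transmit covariances only through this sum, they carry over verbatim. Hence the reconstructed point is feasible for (P2.2.$o$).

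The main obstacle is showing the objective is unchanged, and this is exactly where Type-II diverges from Proposition~1. In the Type-I proof the surrogate rate depends on the transmit covariances only through the per-GBS \emph{total} covariance together with the single desired term $\operatorname{tr}(\mathbf{H}_m^{(x)}\mathbf{W}_{m,k}[n])$, both of which are preserved, so the objective is trivially invariant. For Type-II the dedicated sensing signal is cancelled, so by \eqref{SINRII} the interference reads $\sum_{(l,i)\ne(m,k)}\operatorname{tr}(\mathbf{H}_m^{(x)}\mathbf{W}_{l,i}[n])$, which involves $\sum_i\mathbf{W}_{l,i}[n]$ rather than the total covariance; because the reconstruction migrates power from $\{\mathbf{W}_{l,i}[n]\}$ into $\mathbf{R}_l[n]$, the ``total covariance is preserved'' shortcut no longer controls these cross terms. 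My plan is therefore to track the individual information traces in \eqref{rate:SCAII} directly: the desired term is held fixed by the invariance above, while a Cauchy--Schwarz bound on the rank-one factor of $\tilde{\mathbf{W}}_{l,i}[n]$ yields $\operatorname{tr}(\mathbf{H}_m^{(x)}\tilde{\mathbf{W}}_{l,i}[n])\le\operatorname{tr}(\mathbf{H}_m^{(x)}\mathbf{W}_{l,i}^{**}[n])$ for every cross term, i.e. the reconstruction can only lower the interference. The delicate point is that the surrogate reacts to this in two opposing ways — reduced interference lowers the concave logarithm but raises the linear penalty in \eqref{rate:SCAII} — so I cannot conclude invariance term by term. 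I would close the argument by the relaxation bound: the reduced-interference point is feasible for (SDR2.2.$o$), which upper-bounds (P2.2.$o$), so it cannot strictly exceed the relaxed optimum; combined with optimality of $\{\mathbf{W}_{l,i}^{**}[n]\}$ this forces the Cauchy--Schwarz inequalities to be tight at the relaxed optimum, whence every trace entering \eqref{rate:SCAII} is preserved and the surrogate value is unchanged. Verifying this tightness rigorously is the step I expect to require the most care.
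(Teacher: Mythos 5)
Your structural and feasibility steps are correct and coincide with the paper's intended argument (the paper's own proof of this proposition is literally the Proposition~1 argument transplanted, "similar \ldots and thus omitted"): rank-one holds by \eqref{app:2:2}, $\tilde{\mathbf{R}}_l[n]\succeq\mathbf{0}$ follows from your projector identity, and since \eqref{P2:con:sen} and \eqref{P2:con:pow} depend on the transmit covariances only through the per-GBS total covariance, which \eqref{app:2:3} preserves, feasibility carries over. You have also put your finger on the one place where ``similar to Proposition~1'' is not automatic: the Type-II surrogate \eqref{rate:SCAII} depends on the information covariances alone (the sensing covariances drop out of it), so the Proposition-1 bookkeeping --- everything reduces to preserved total covariances plus the preserved own-channel trace --- does not control the cross-interference traces $\operatorname{tr}(\mathbf{H}_m^{(x)}\mathbf{W}_{l,i}[n])$, $(l,i)\neq(m,k)$, once the reconstruction migrates power from the $\mathbf{W}$'s into the $\mathbf{R}$'s.

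However, your closing step is a non sequitur, and this is a genuine gap. Feasibility of $(\tilde{\mathbf{W}},\tilde{\mathbf{R}})$ in (SDR2.2.$o$) yields only the inequality ``surrogate value at the reconstructed point $\le$ optimal value of (SDR2.2.$o$)'', which points in the wrong direction: a feasible point whose objective is \emph{strictly below} the optimum contradicts neither its own feasibility nor the optimality of $(\mathbf{W}^{**},\mathbf{R}^{**})$, so nothing ``forces'' your Cauchy--Schwarz inequalities to be tight. To force tightness you would need the reverse inequality --- that the reconstruction cannot decrease the surrogate --- and that is exactly the monotonicity you yourself identified as unavailable: per user the surrogate has the form $g(t)=\log_2 t - t\log_2(e)/(I^{(o)}_{m,k}+\sigma^2)+\mathrm{const}$ evaluated at $t=S_{m,k}+I_{m,k}+\sigma^2$, and lowering interference increases $g$ only when $t\ge I^{(o)}_{m,k}+\sigma^2$, a relation between the new SDR optimum and the previous SCA iterate that nothing in the problem guarantees. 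Your argument is therefore circular: tightness requires monotonicity, and you invoke tightness to dispense with monotonicity. A defensible way to close would be either (i) to argue at the level of the true Type-II rate in \eqref{SINRII} --- desired power preserved, every cross term weakly reduced, hence the true rate weakly increases, which is what the AO/SCA algorithm actually needs for monotone convergence, though it is a weaker statement than ``the optimal value of (P2.2.$o$) is attained and equals that of (SDR2.2.$o$)'' --- or (ii) to establish preservation of the cross-channel traces at the SDR optimum directly from optimality (e.g., KKT) conditions. As written, the assertion ``every trace entering \eqref{rate:SCAII} is preserved'' remains unproven, and it is precisely the statement on which the proposition hinges.
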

\begin{proof}
The proof is similar to {\textit{Proposition 1}} and thus omitted.
\end{proof}

Next, we handle the trajectory optimization problem via SCA. Towards this end, in each SCA iteration $o$ with local point ${\bf q}_k[n]$, the average sum rate is approximated as
\begin{align}
&{r_{m,k}^{(x,{\text {II}})}}({{\bf{W}}_{m,k}}[n], {{\bf{R}}_m}[n],{{\bf{q}}_k}[n])\nonumber \\
& \approx c_{m,k}^{(x,{\text{II}},o)}[n] + {\bf{d}}_{m,k}^{(x,{\text{II}},o)T}[n]({\bf{q}}_k[n] - {\bf{q}}_k^{(o)}[n])\nonumber \\
&\triangleq  {\tilde r^{(x,{\text {II}},o)}}({{\bf{W}}_{m,k}}[n], {{\bf{R}}_m}[n],{{\bf{q}}_k}[n]) , \label{QoS:linearII}
\end{align}
where $c_{m,k}^{(x,{\text{II}},o)}[n] $ and ${\bf{d}}_{m,k}^{(x,{\text{II}},o)}[n]$ are defined in \eqref{SCA:cmk:II} and \eqref{SCA:dmk:II} with ${g_{m,k}^{(x,{\text{II}},o)}}[n]$ and ${h_{m,k}^{(x,{\text{II}},o)}}[n]$ defined as \eqref{SCA:gmk:II} and \eqref{SCA:hmk:II}
\begin{align}\label{SCA:cmk:II}
c_{m,k}^{(x,{\text{II}},o)}[n]&={\log _2} ( \sum\limits_{l \in {\cal M}} \sum\limits_{i \in {\cal K}} {\eta^{(x)} ( {\bf{W}}_{l,i}[n] , {\bf{q}}_k^{(o)}[n] )} \nonumber \\
&+ \frac{{{\sigma ^2}}}{\kappa } (|| {{\bf{q}}_k^{(o)}[n] - {{\bf{u}}_m}} ||^2 + H_k^2 )) \nonumber \\
&- {\log _2} ( \sum\limits_{(l,i) \ne (m,k)}{\eta^{(x)} ({{\bf{W}}_{l,i}}[n], {\bf{q}}_k^{(o)}[n])} \nonumber \\
& +  \frac{{{\sigma ^2}}}{\kappa }({ || {{{\bf{q}}_k^{(o)}}[n] - {{\bf{u}}_m}}|| ^2} + {H_k^2})),
\end{align}
\begin{align}\label{SCA:dmk:II}
{\bf{d}}_{m,k}^{(x,{\text{II}},o)}[n]&= \frac{{ {{\log }_2}e}}{g_{m,k}^{(x,{\text{II}},o)}[n]} (\sum\limits_{l \in {\cal M}} \sum\limits_{i \in {\cal K}} \nu^{(x)} ({\bf{W}}_{l,i}[n],{\bf{q}}_k^{(o)}[n]) \nonumber \\
& + \frac{{2{\sigma ^2}}}{\kappa }({{\bf{q}}_k}[n] - {{\bf{u}}_m})) \nonumber \\
&- \frac{ {{\log }_2}e}{h_{m,k}^{(x,{\text{II}},o)}[n]} (\sum\limits_{(l,i) \ne (m,k)}\nu^{(x)} ({{\bf{W}}_{l,i}}[n],{\bf{q}}_k^{(o)}[n]) \nonumber \\
&+ \frac{{2{\sigma ^2}}}{\kappa }({{\bf{q}}_k}[n] - {{\bf{u}}_m}) ),
\end{align}
\begin{align}\label{SCA:gmk:II}
{g_{m,k}^{(x,{\text{II}},o)}}[n] &= \sum\limits_{l \in {\cal M}} \sum\limits_{i \in {\cal K}} \eta^{(x)}( {\bf{W}}_{l,i}[n],{\bf{q}}_k^{(o)}[n] ) \nonumber \\
&+ \frac{{{\sigma ^2}}}{\kappa }({ || {{{\bf{q}}_k^{(o)}}[n] - {{\bf{u}}_m}} || ^2} + {H_k^2}),
\end{align}
\begin{align}\label{SCA:hmk:II}
{h_{m,k}^{(x,{\text{II}},o)}}[n] &= \sum\limits_{(l,i) \ne (m,k)} \eta^{(x)} ( {\bf{W}}_{l,i}[n],{\bf{q}}_k^{(o)}[n] ) \nonumber \\
&+ \frac{{{\sigma ^2}}}{\kappa }({\left\| {{{\bf{q}}_k}[n] - {{\bf{u}}_m}} \right\|^2} + {H_k^2}).
\end{align}
Based on the above approximation, we can obtain the approximated UAV trajectory optimization problem in each iteration $o$. By using the trust region method, we can find a converged solution to the UAV trajectory optimization problem, for which the details are similar to those in Section \ref{subsec:tra} and thus omitted. By combining the solutions of the above three subproblems together with AO, problem (P2) can be efficiently solved.

\subsection{Solutions to Problems (P3) and (P4)}
In {\textit{Case 3}} and {\textit{Case 4}}, problems (P3) and (P4) correspond to the scenario with vertically placed antennas GBSs, under Type-I and Type-II UAV receivers, respectively. We also use the AO to solve the two problems similarly as in Section \ref{Sec:solution} and Section \ref{subsec:p2}. In the two cases, the UAV-GBS association and coordinated transmit beamforming design are similar to those for problems (P1) and (P2), and thus the solutions are omitted for brevity. In the following, we only focus on the UAV trajectory design with vertically placed antennas at GBSs.

%In transmit beamforming optimization, \eqref{P2:con:sen} is replaced by ${\zeta^{(z)} _q} ({{\bf{w}}_{l,i}}[n],{{\bf{R}}_l}[n]) \ge \Gamma  , \forall q, n$, which is also convex. Furthermore, with ${\bf{H}}_{m}^{(z)}({{\bf{q}}_k}[n])$, we derive ${\bf{B}}_{m}^{(z,{\text{I}},o)}({{\bf{q}}_k}[n])$ , ${\bf{B}}_{m}^{(z,{\text{II}},o)}({{\bf{q}}_k}[n])$, $a_{m,k}^{(z,{\text{I}},o)}[n]$, and $a_{m,k}^{(z,{\text{II}},o)}[n]$, respectively. Thus obtain the objective function of (P3.3.$o$) and (P4.3.$o$) as ${\bar r}^{({\textit{3}},o)}({{\bf{W}}_{m,k}}[n], {{\bf{R}}_m}[n],{{\bf{q}}_k}[n])$ and ${\bar r}^{({\textit{4}},o)}({{\bf{W}}_{m,k}}[n], {{\bf{R}}_m}[n],{{\bf{q}}_k}[n])$. The transmit beamforming optimization solutions are similar to Section \ref{subsec:bf}.

In general, we can use an algorithm similar to that in Section \ref{subsec:tra} to solve the sub-problem of UAV trajectory optimization in {\textit {Case 3}} and {\textit {Case 4}}. In particular, we only need to derive new approximations for ${\eta^{(z)} \left( {\bf{W}}_{l,i}[n],{\bf{q}}_k[n] \right)}$ and $\mu^{(z)} \left({\bf{R}}_l[n],{\bf{q}}_k[n]\right)$, which are respectively expressed as
\begin{align}
&\eta^{(z)} \left({\bf{W}}_{l,i}[n],{\bf{q}}_k[n]\right) \nonumber \\
&=  \sum\limits_{r = 1}^{{N_a}} {[ {\bf{W}}_{l,i}[n] ]}_{r,r} + 2\sum\limits_{p = 1}^{N_a}   \sum\limits_{q = p+1}^{{N_a}}  | {\left[ {\bf{W}}_{l,i}[n] \right]}_{p,q} | \nonumber \\
&\times \cos ( \theta _{p,q}^{{\bf{W}}_{l,i}[n]}+ 2\pi \frac{d}{\lambda }(q - p)\frac{H_k}{\sqrt{ {{\left\| {{{\bf{q}}_k}[n]  - {{\bf{u}}_m}} \right\|}^2 + {H_k^2}}} } ) ,
\end{align}
and
\begin{align}
&\mu^{(z)} ({\bf{R}}_l[n],{\bf{q}}_k[n])\nonumber \\
&= \sum\limits_{r = 1}^{{N_a}} {\left[ {\bf{R}}_l [n] \right]}_{r,r}  + 2\sum\limits_{p = 1}^{N_a} \sum\limits_{q = p+1}^{N_a} | {\left[ {\bf{R}}_l[n] \right]}_{p,q} | \nonumber \\
&\times \cos ( \theta _{p,q}^{{\bf{R}}_{l}[n]} + 2\pi \frac{d}{\lambda }(q - p)\frac{H_k}{\sqrt{ {{\left\| {{{\bf{q}}_k}[n] - {{\bf{u}}_m}} \right\|}^2 + {H_k^2}} }} ).
\end{align}
Moreover, we derive the new approximation $\nu^{(z)} ({{\bf{W}}_{l,i}}[n],{\bf{q}}_k^{(o)}[n])$ and $\upsilon^{(z)} ({{\bf{R}}_l}[n],{\bf{q}}_k^{(o)}[n])$, which are given in \eqref{SCA:nu:z} and \eqref{SCA:upsilon:z}, respectively, at the top of next page.
\begin{figure*}
\begin{align}\label{SCA:nu:z}
\nu^{(z)} ({\bf{W}}_{l,i}[n],{\bf{q}}_k^{(o)}[n])&= \sum\limits_{p = 1}^{{N_a}} \sum\limits_{q = p+1}^{{N_a}}  \left| {\left[ {\bf{W}}_{l,i}[n] \right]}_{p,q} \right|  \sin ( \theta _{p,q}^{{\bf{W}}_{l,i}[n]}+2\pi \frac{d}{\lambda }(q - p)\frac{H_k}{\sqrt{ {{ || {\bf{q}}_k^{(o)}[n] - {{\bf{u}}_m}|| }^2 + {H_k^2}} }}) \nonumber \\
&\cdot \frac{4\pi d H_k(q - p) }{\lambda \sqrt{ {{ || {{\bf{q}}_k^{(o)}}[n] - {{\bf{u}}_m} || }^2 + {H_k^2}} }^{3}}\cdot ({{\bf{q}}_k^{(o)}}[n] - {{\bf{u}}_m}),
\end{align}
\end{figure*}
\begin{figure*}
\begin{align}\label{SCA:upsilon:z}
\upsilon^{(z)} ({\bf{R}}_l[n],{\bf{q}}_k^{(o)}[n])&= \sum\limits_{p = 1}^{{N_a}} \sum\limits_{q = p+1}^{{N_a}}  \left| {{{\left[ {{\bf{R}}_l}[n] \right]}_{p,q}}} \right|  \sin ( \theta _{p,q}^{{{\bf{R}}_l}[n]}+ 2\pi \frac{d}{\lambda }(q - p)\frac{H_k}{\sqrt{ {{ || {{{\bf{q}}_k^{(o)}}[n] - {{\bf{u}}_m}}|| }^2 + {H_k^2}} }})\nonumber \\
& \cdot \frac{4\pi d H_k(q - p)}{\lambda \sqrt{ {{ || {{{\bf{q}}_k^{(o)}}[n] - {{\bf{u}}_m}}|| }^2 + {H_k^2}} }^{3}} \cdot ({{\bf{q}}_k^{(o)}}[n] - {{\bf{u}}_m}),
\end{align}
\end{figure*}

Therefore, the UAV trajectory optimization and correspondingly problems (P3) and (P4) can be efficiently solved.

\section{Numerical Results}\label{sec:numerical}

In this section, we provide numerical results to validate the performance of our proposed designs in networked ISAC system for supporting LAE in 3D airspace.

\subsection{Benchmark Schemes}
For comparison, we consider the following benchmark schemes.

\begin{itemize}
\item {\bf{Transmit beamforming with straight flight}}: This scheme assumes that each authorized UAV $k$ takes off from the initial location ${\bf{q}}_k^{\rm{I}}$ and lands at the final location ${\bf{q}}_k^{\rm{F}}$ with straight flight trajectory. The flying speed is assumed to be constant as
\begin{align}
{V_k} = \frac{1}{N}\left\| {{\bf{q}}_k^{\rm{I}} - {\bf{q}}_k^{\rm{F}}} \right\|.
\end{align}
Hence, the whole UAV trajectories ${\bf{q}}_k[n]$ are accordingly determined by
\begin{align}
{{\bf{q}}_k}[n + 1] &= {{\bf{q}}_k}[n] + \frac{{{\bf{q}}_k^{\rm{F}} - {\bf{q}}_k^{\rm{I}}}}{{\left\| {{\bf{q}}_k^{\rm{F}} - {\bf{q}}_k^{\rm{I}}} \right\|}}({V_k}\Delta t), \nonumber \\
&~~~~n \in \{1, \ldots,  N-1\}.
\end{align}
With given $\{{{\bf{q}}_k}[n]\}$, the transmit beamforming $\{{{\bf{w}}_{l,i}}[n]\}$, $\{{{\bf{R}}_{l}}[n]\}$, and UAV-GBS association $\{\alpha_{l,i}[n]\}$ are jointly optimized by solving (P1)-(P4) under different cases.

\item {\bf{Joint power allocation and trajectory design with isotropic transmission}}: In this scheme, the GBSs employ the isotropic transmission with ${\tilde{\bf{W}}_{l,i}}[n] = {\textstyle{{p_{l,i}^c[n]} \over {{N_a}}}}{\bf{I}}$ and ${\tilde{\bf{R}}_l}[n] = {\textstyle{{p_l^s[n]} \over {{N_a}}}}{\bf{I}}$, where ${p_{l,i}^c}[n]$ and ${p_l^s}[n]$ denote the transmit power of information signals and dedicated sensing signals, respectively. Accordingly, the power constraint at each GBS $l$ becomes $\sum\nolimits_{i \in {\cal K}} {p_{l,i}^c[n]}  + p_l^s[n] \le {P_{\max }}$. By substituting $\{{\tilde{\bf{W}}_{l,i}}[n]\}$ and $\{{\tilde{\bf{R}}_l}[n]\}$ into (P1)-(P4) for the four cases, we obtain the corresponding power allocation problems, which can be solved by iteratively optimizing the UAV-GBS association $\{\alpha_{l,i}[n]\}$, power allocation ${p_l^c[n]}$, ${p_l^s[n]}$, and trajectory design $\{{{\bf{q}}_k}[n]\}$.

\end{itemize}

\subsection{Simulation Results}

In the simulation, we consider a networked ISAC system with $M=3$ GBSs located in an area with an acreage of $400~{\text{m}} \times 400~ {\text{m}}$. The antenna spacing of each GBS is set as $d = \frac{\lambda }{2}$ and the number of antennas at each GBS is $N_a=4$. The maximum transmit power budget of each GBSs is $P_{\max}=3 ~{\text{W}}$. In the airspace, we consider $K=2$ authorized UAVs flying at an altitude of ${H_1}={H_2}= 80~{\text{m}}$. The target sensing area is shown in Figs. \ref{fig:bpg} - \ref{fig:bpg:bm} with $Q=20$ sample locations. We set the initial locations of UAVs as ${\bf{q}}_1^{\rm{I}} = [50{\text{m}},250{\text{m}}]$ and  ${\bf{q}}_2^{\rm{I}} = [50 {\text{m}},150{\text{m}}]$ and the final locations as ${\bf{q}}_1^{\rm{F}} = [350{\text{m}},250{\text{m}}]$ and ${\bf{q}}_2^{\rm{F}} = [350{\text{m}},150{\text{m}}]$. The maximum flying speed of authorized UAVs is $V_{\max}=10~ {\text{m/s}}$ and the total number of time slots is $N=40$. Furthermore, the channel power gain at reference distance of one meter is $\kappa=-45~ {\text{dB}}$, and the noise power at authorized UAVs is $\sigma^2=-100~{\text{dBW}}$.

\begin{figure*}[htbp]
\centering
\subfigure[$\Gamma$=-20 dBW, {\textit{Case 1}}.] {\includegraphics[width=4.4cm]{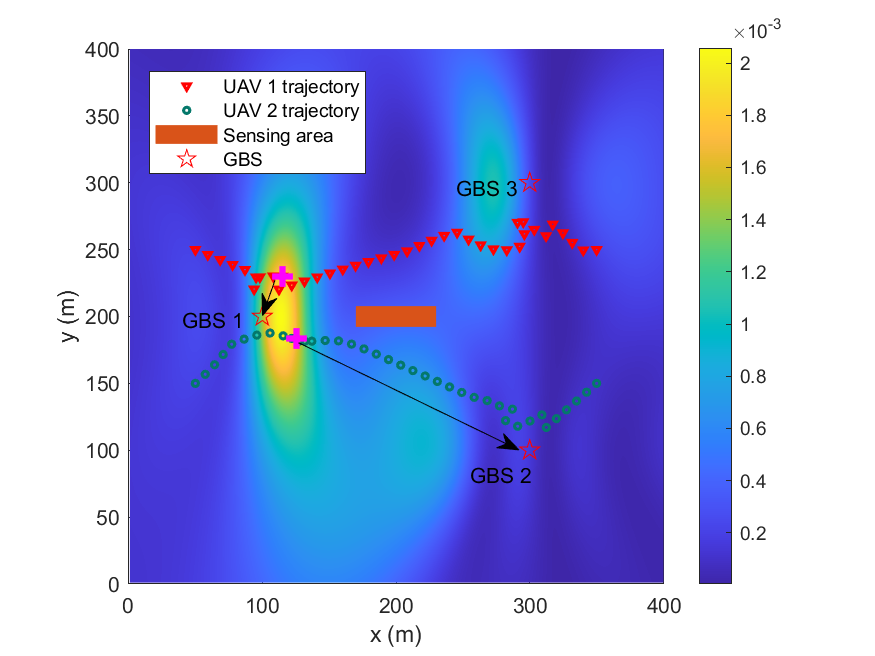}}
\subfigure[$\Gamma$=-7 dBW, {\textit{Case 1}}.]
{\includegraphics[width=4.4cm]{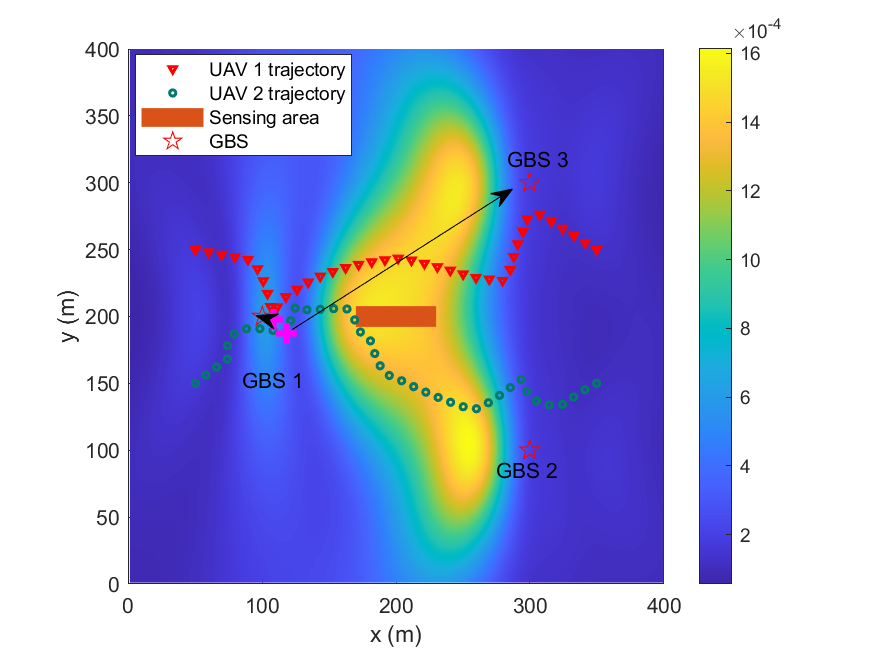}}
\subfigure[$\Gamma$=-20 dBW, {\textit{Case 3}}.]
{\includegraphics[width=4.4cm]{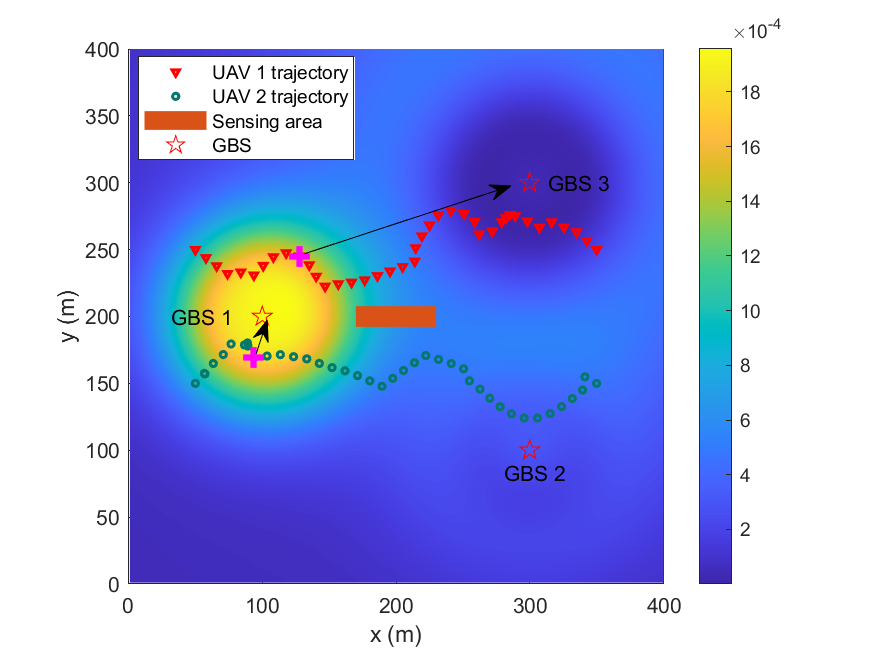}}
\subfigure[$\Gamma$=-7 dBW, {\textit{Case 3}}.]
{\includegraphics[width=4.4cm]{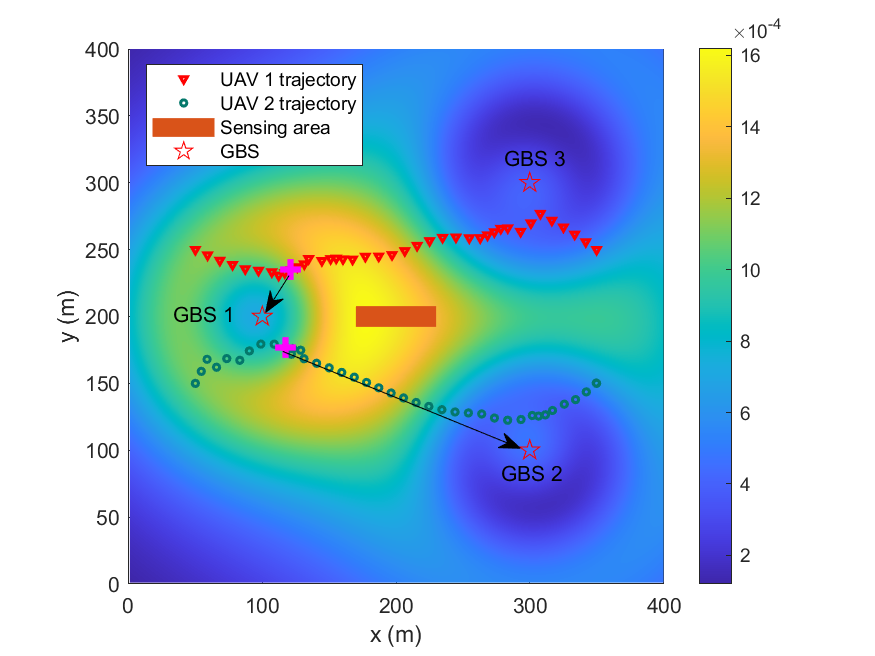}}
\caption{The achieved illumination power gains and authorized UAV trajectory via proposed design with Type-I UAV receivers. The chosen time slot is $N=10$, the carmine '+' denote the authorized UAVs location of this time slot, and the arrow associated with each authorized UAV and GBS correspond to their association relationship.}
\label{fig:bpg}
\end{figure*}

\begin{figure*}[htbp]
\centering
\subfigure[$\Gamma$=-20 dBW, {\textit{Case 2}}.] {\includegraphics[width=4.4cm]{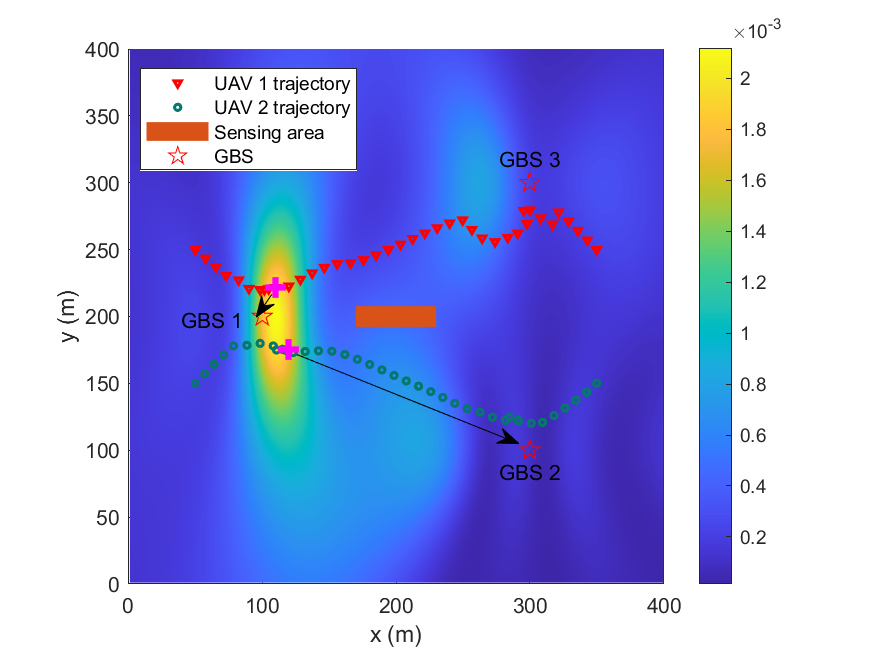}}
\subfigure[$\Gamma$=-7 dBW, {\textit{Case 2}}.]{\includegraphics[width=4.4cm]{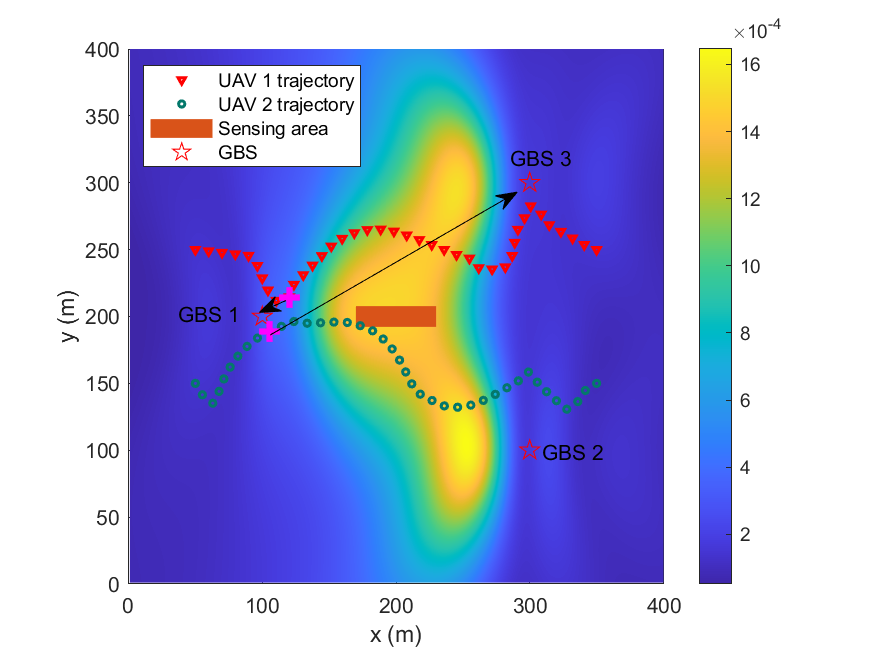}}
\subfigure[$\Gamma$=-20 dBW, {\textit{Case 4}}.]{\includegraphics[width=4.4cm]{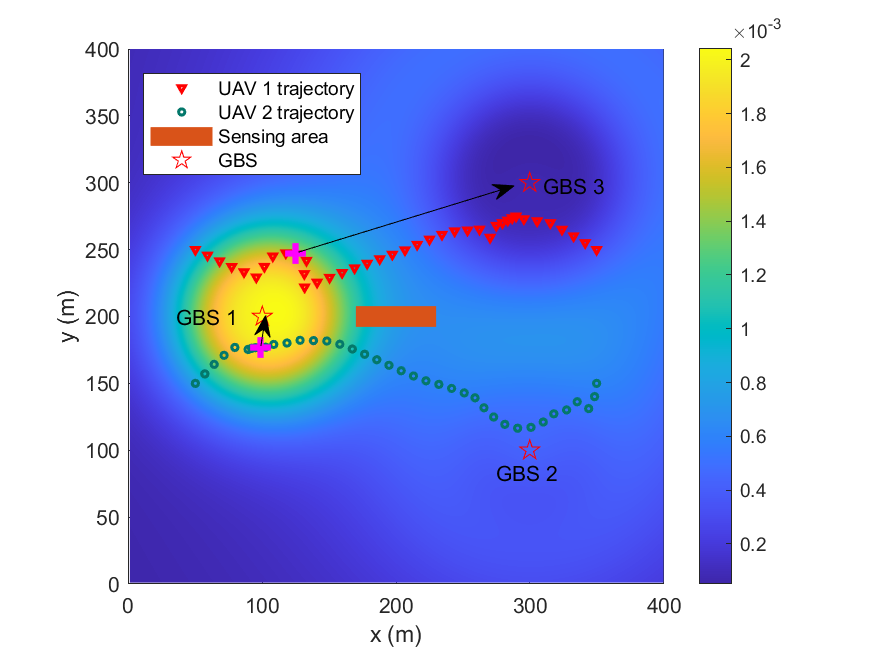}}
\subfigure[$\Gamma$=-7 dBW, {\textit{Case 4}}.]{\includegraphics[width=4.4cm]{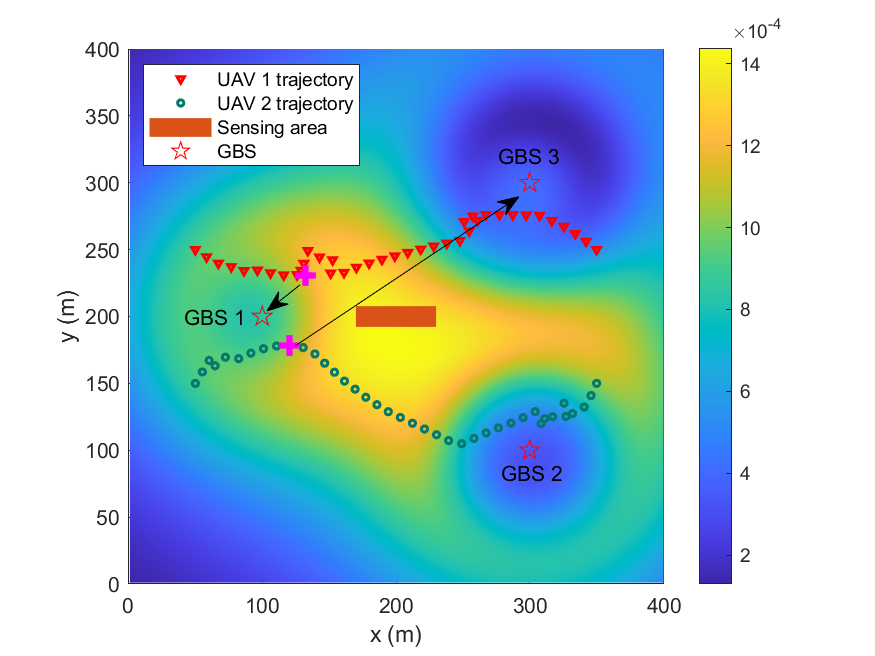}}
\caption{The achieved illumination power gains and authorized UAV trajectory via proposed design with Type-II UAV receivers. The chosen time slot is $N=10$, the carmine '+' denote the authorized UAVs location of this time slot, and the arrow associated with each authorized UAV and GBS correspond to their association relationship.}
\label{fig:bpg:II}
\end{figure*}

\begin{figure*}[htbp]
\centering
\subfigure[Transmit beamforming with straight flight, $\Gamma$=-20 dBW, {\textit{Case 1}}.] {\includegraphics[width=4.4cm]{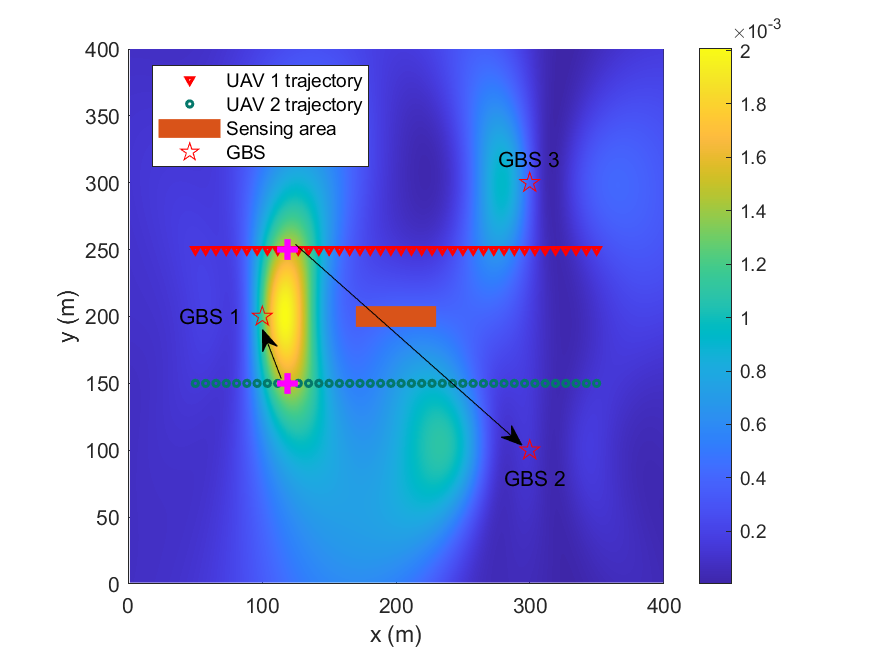}}
\subfigure[Transmit beamforming with straight flight, $\Gamma$=-7 dBW, {\textit{Case 1}}.] {\includegraphics[width=4.4cm]{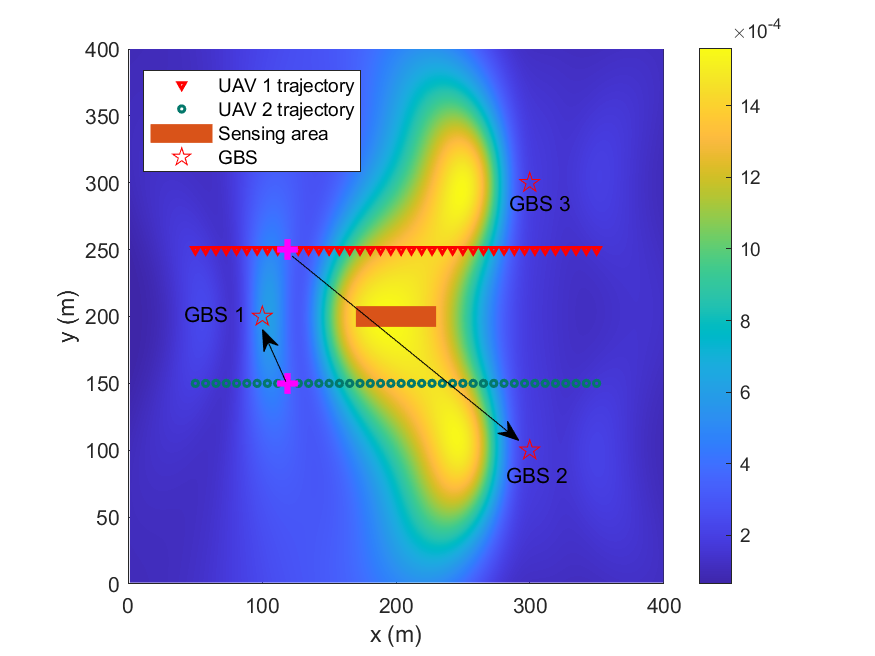}}
\subfigure[Joint power allocation and trajectory design, $\Gamma$=-20 dBW, {\textit{Case 1}}.] {\includegraphics[width=4.4cm]{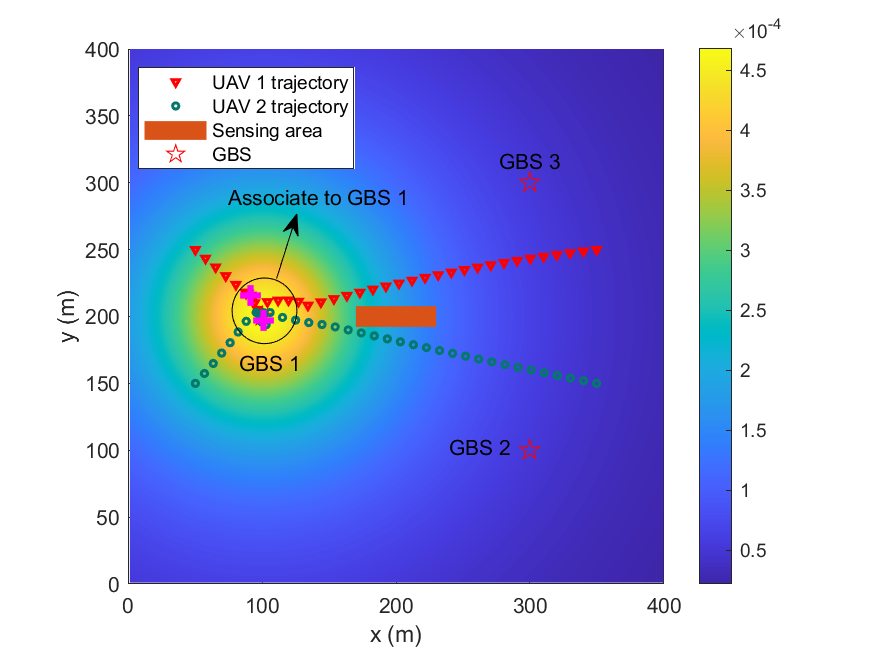}}
\subfigure[Joint power allocation and trajectory design, $\Gamma$=-13 dBW, {\textit{Case 1}}.] {\includegraphics[width=4.4cm]{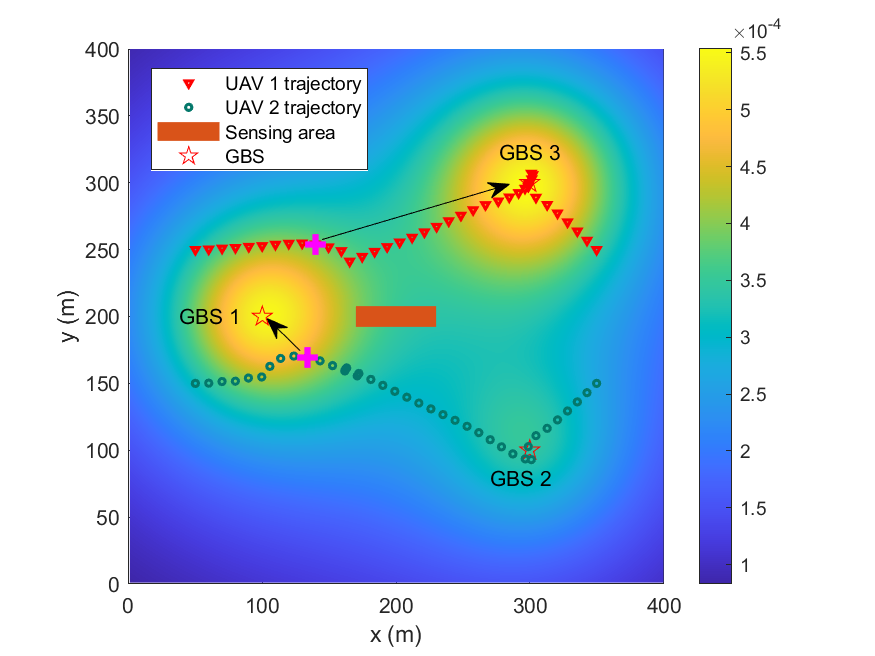}}
\caption{The achieved illumination power gains and UAV trajectory via benchmark designs of Transmit beamforming with straight flight and Joint power allocation and trajectory design under {\textit{Case 1}}. The chosen time slot is $N=10$, the carmine '+' denote the UAVs location of this time slot, and the arrow associated with each UAV and GBS correspond to their association relationship.}
\label{fig:bpg:bm}
\end{figure*}

Fig. \ref{fig:bpg} shows the optimized illumination power gains and authorized UAV trajectories with Type-I UAV receivers by our proposed design at time slot $n=10$. It is observed that under both horizontal and vertical antenna configurations, as $\Gamma$ increases from $-20$ dBm to $-7$ dBm, the authorized UAVs fly more closely to the GBSs to obtain enhanced communication rate, and the illumination power is more concentrated towards the sensing area. This is because when the sensing constraints become stringent, the GBSs need to steer the transmit beams toward the sensing area at each time slot, and thus the authorized UAVs prefer to fly closer to the GBSs to exploit the strong signal power. Comparing Figs. \ref{fig:bpg}(a) and \ref{fig:bpg}(b) versus Figs. \ref{fig:bpg}(c) and \ref{fig:bpg}(d), it is observed that the schemes with horizontal antennas achieve higher illumination power gains at the authorized UAVs than their counterparts with vertical antennas. This is because the vertically placed antennas can only change its beamformers vertically to serve the authorized UAVs in 3D airspace, which severely restricts the ISAC coverage performance. By contrast,  the horizontal antennas can properly adjust the fan-shaped mainlobe to achieve enhanced 3D coverage. It is also observed that in Fig. \ref{fig:bpg}(a) and \ref{fig:bpg}(b), UAV $1$ is associated to GBS $1$, thus hovering near GBS $1$ for higher communication performance, while UAV $2$ is associated to GBS $2$ when $\Gamma=-20$ dBW and associated to GBS $3$ when $\Gamma=-7$ dBW. The corresponding trajectories of UAV $1$ and UAV $2$ show that the directions of flight are also towards the associated GBS $2$ and GBS $3$. Similar observations can be found in Fig. \ref{fig:bpg}(c) and \ref{fig:bpg}(d).

Fig. \ref{fig:bpg:II} depicts the optimized illumination power gains and authorized UAV trajectories with Type-II receivers by our proposed schemes at time slot $n=10$. By comparing Fig. \ref{fig:bpg:II} versus Fig. \ref{fig:bpg}, it is observed that the illumination power gains with Type-II UAV receivers are higher than that with Type-I UAV receivers. This validates the importance of canceling dedicated sensing signal interference in enhancing ISAC performance. Furthermore, the interference-canceling ability of GBSs also influences the optimized UAV trajectories. It is observed that the two UAVs in Fig. \ref{fig:bpg:II} keep a greater distance especially from $n=10$ to $n=30$ than that in Fig. \ref{fig:bpg} when flying through the airspace that is near the sensing area. This is due to the fact that the authorized UAVs with Type-II receivers have the ability to cancel the sensing interference, and thus GBSs can steer the information and sensing beams more concentratedly to the sensing area even when UAVs are flying near it. By contrast, when GBSs serve Type-I UAV receivers, they tend to reduce the sensing beam strength to avoid the sensing interference that cannot be cancelled. This thus strengthens the communication beams to illuminate sensing area, which allows Type-I UAV receivers to fly closer to the sensing area.

%  感知beam可以分配到更多功率，而TI user 需要更多的功率分配给通信beam，同时去illuminate感知区域，所以UAV飞行轨迹更加靠近感知区域，而TII uers由于可以消除干扰，可以更好的利用ISAC 之间的tradeoff进行合理的波束设计，使得通信beam和感知beam 的方向更加区别，UAV 不需要靠近感知区域也可以获得更强的通信性能,同时有更多的轨迹自由度来避免通信信号的干扰

Fig. \ref{fig:bpg:bm} shows the optimized UAV trajectory and illumination power gain by the two benchmarks, i.e., transmit beamforming with straight flight and joint power allocation and trajectory design with isotropic transmission under {\textit{Case 1}}. In Figs. \ref{fig:bpg:bm}(a) and \ref{fig:bpg:bm}(b), the UAV trajectories are fixed as straight lines, which result in limited communication performance, as the UAVs are not allowed to fly near to GBSs for communication. The illumination power gain at the sensing area is also degraded because the GBSs need to steer the beams to illuminate the pre-designed routine, while the UAV trajectories can be designed to be close to the sensing area to facilitate both sensing and communication in the proposed designs. Thanks to the coordinated transmit beamforming design at GBSs, the degradation of sensing performance is less significant. In Figs. \ref{fig:bpg:bm}(c) and \ref{fig:bpg:bm}(d), the illumination power gain by isotropic transmission is significantly worse than the schemes with beamforming optimization. This is due to the fact that the GBSs only perform the power allocation design, which lacks the DoFs for communicating with UAVs and sensing the interested area. Thus leading to worse interference suppression. The above results validate again the importance of our joint coordinated transmit beamforming and UAV trajectory design.

Fig. \ref{fig:tradeoff} shows the achieved average sum rate of authorized UAVs versus the illumination power for sensing. It is observed that the UAVs' average sum rate declines for all schemes as the sensing requirement $\Gamma$ increases. This is because GBSs need to spend more transmit power to satisfy the sensing requirements by covering the interested sensing area. It is also observed that the proposed scheme under {\textit{Case 3}} achieves the highest average sum rate among all the considered schemes, and the performance gap over other schemes becomes larger when $\Gamma$ increases. This validates again the benefit of the proposed joint coordinated transmit beamforming and UAV trajectory design, and shows that the horizontally placed antennas are beneficial for LAE. Furthermore, the benchmark of joint power allocation and trajectory design is observed to perform significantly worse than other schemes, and even becomes infeasible when $\Gamma > 13$ dBW. This is due to its limited ability to reshape the transmit beamformers to cater to increased sensing requirements.

%Fig. \ref{fig:multitime} shows the achieved average sum rate of authorized UAVs versus the number of time slots of the UAV flight $N$, where the illumination power gain threshold is set as $\Gamma= -7$ dBW. It observed that our proposed joint coordinated transmit power and UAV trajectory design with horizontal antennas significantly outperforms the other schemes. In particular, as $N$ increased, the increment of average sum rate becomes marginal. This is because in  most of the flight duration, the UAV trajectories pass through the surrounding airspace of sensing area, which is where most of the beams are aligned. This reduces the advantage over time of passing through or hovering at the airspace that enjoys stronger illumination power. Moreover, the schemes with horizontal antennas have a larger increment compared to those with vertical antennas, especially in schemes with Type-I receivers. Thanks to the horizontally placed antennas, which properly adjust the lobes for better 3D coverage.

\begin{figure}[ht]
\centering
    \includegraphics[width=8cm]{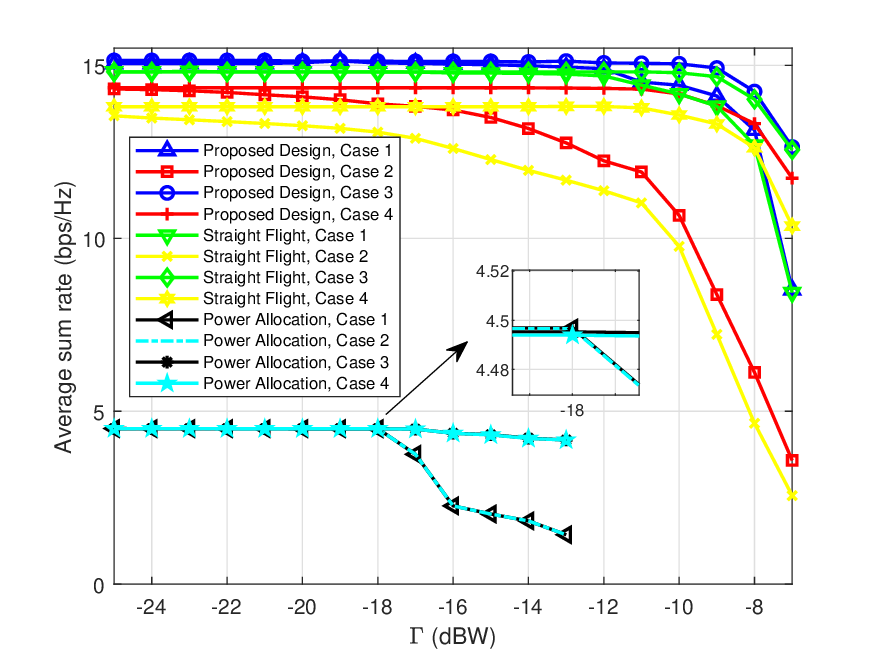}
\caption{The average sum rate of authorized UAVs versus the illumination power for sensing.}
\label{fig:tradeoff}
\end{figure}

%\begin{figure}[ht]
%\centering
%    \includegraphics[width=7.5cm]{multi-time_v3.eps}
%\caption{The average sum rate of authorized UAVs versus the number of time slots of the UAV flight. The illumination power gain threshold is $\Gamma= -7$ dBW}
%\label{fig:multitime}
%\end{figure}

\section{Conclusion} \label{sec:conclusion}
This paper considered a networked ISAC system, in which multiple GBSs employed joint information and sensing signals to perform cooperative sensing toward an targeted 3D area and communicate with multiple UAVs to support LAE. We considered two configurations of horizontal and vertical antennas at GBSs. We also considered two types of UAV receivers that without and with the ability to cancel the dedicated sensing interference. Under each setup, we proposed the joint coordinated transmit beamforming and trajectory design to maximize the average sum rate of authorized UAVs, subject to the illumination power gain constraints toward the interested sensing area, the transmit power constraints at GBS, and the practical UAV flight constraints. We proposed efficient algorithms to solve the resultant mixed-integer non-convex problems by using the AO, SDR, and SCA techniques. Finally, numerical results showed that the proposed joint transmit beamforming and trajectory design with Type-II UAV receivers and horizontal antennas at GBSs significantly outperforms other designs in supporting LAE. The UAV trajectory design significantly enhances the communication performance with properly controlled interference and the coordinated transmit beamforming together with  horizontal antennas at GBSs consistently lead to better ISAC performance.

%\vspace{-3ex}

\appendix
\subsection{Proof of Proposition 1} \label{appendixA}
In the proof, we omit the superscript ($x$) and ($o$) in the following formulas for brevity. It can be verified that based on \eqref{app:1:1} and \eqref{app:1:3}, the constructed solutions of $ {{\bf{\bar W}}}_{l,i}[n]$ and $ {{\bf{\bar R}}}_{l}[n]$ satisfy the illumination power gain constraint \eqref{P2:con:sen} and the power constraint \eqref{P2:con:pow} in problem (P1.3.$o$).

Next, we verify that $ {{\bf{\bar W}}}_{l,i}[n]$ and $ {{\bf{\bar R}}}_{l}[n]$ also achieve the same objective for problem (P1.3.$o$). From \eqref{app:1:1} and \eqref{app:1:2}, we have
\begin{align}
&{\rm{tr}}({\bf{H}}_{m}({{\bf{q}}_k}[n],H_k){{{\bf{\bar W}}}_{l,i}}[n]) \nonumber \\
&= {\rm{tr}}({\bf{h}}_{m}^H({{\bf{q}}_k}[n],H_k){{\bf{\bar w}}_{l,i}} {\bf{\bar w}}_{l,i}^H[n]{\bf{h}}_{m} ({{\bf{q}}_k}[n],H_k)) \nonumber \\ &= {\rm{tr}}({{\bf{H}}_{m}}({{\bf{q}}_k}[n],H_k){\bf{W}}_{l,i}^ * [n]). \label{app:1:4}
\end{align}
For the first term in \eqref{rate:SCA}, it follows that
\begin{align}
&{\log _2}(\sum\limits_{l \in {\cal M}} {{\rm{tr}}({{\bf{H}}_{m}}({{\bf{q}}_k}[n],H_k)(\sum\limits_{i \in {\cal K}} {{{{\bf{\bar W}}}_{l,i}}[n]}  + {{{\bf{\bar R}}}_l}[n]}) ) + {\sigma ^2}) \nonumber \\
&={\log _2}(\sum\limits_{l \in {\cal M}} {{\rm{tr(}}{{\bf{H}}_{m}}({{\bf{q}}_k}[n],H_k)(\sum\limits_{i \in {\cal K}} {{\bf{ W}}_{l,i}^ * [n]}  + {\bf{ R}}_l^ * [n])} ) + {\sigma ^2}). \label{app:1:5}
\end{align}
For the remaining terms of \eqref{rate:SCA}, we have
\begin{subequations}\label{app:1:6}
\begin{align}
&{a_{m,k}^{(\text{I},o)}}[n] \!\! +\!\! \sum\limits_{(l,i) \ne (m,k)} \!\!{\rm{tr}} ({\bf{B}}_{m}^{(\text{I},o)} ({{\bf{q}}_k}[n],H_k) \! \cdot \! ({{{\bf{\bar W}}}_{l,i}}[n] \!-\! {\bf{W}}_{l,i}[n])) \nonumber \\
&+\sum\limits_{l \in {\cal M}} {{\rm{tr}}} ({\bf{B}}_{m}^{(\text{I},o)}({{\bf{q}}_k}[n],H_k) \cdot ({{{\bf{\bar R}}}_l}[n] - {\bf{R}}_l[n])) \nonumber \\
&= {a_{m,k}^{(\text{I},o)}}[n]  +  \frac{{{\log }_2}(e)}{2^{a_{m,k}^{(\text{I},o)}[n]}}  \sum\limits_{l \in {\cal M}} {\rm{tr}}({\bf{H}}_{m}({{\bf{q}}_k}[n],H_k)\nonumber \\
& \cdot (\sum\limits_{i \in {\cal K}} {\bf{\bar W}_{l,i}}[n]  + {{\bf{\bar R}}_l}[n] )) \nonumber \\
& - \frac{{{\log }_2}(e)}{2^{{a_{m,k}^{(\text{I},o)}}[n] }} \sum\limits_{l \in {\cal M}} {\rm{tr}}({{\bf{H}}_{m}}({{\bf{q}}_k}[n],H_k) \cdot (\sum\limits_{i \in {\cal K}} {{\bf{W}}_{l,i}[n]}  + {{\bf{R}}_l}[n] )) \nonumber \\
&- \frac{{{\log }_2}(e)}{2^{{a_{m,k}^{(\text{I},o)}}[n]}}{\rm{tr}} ({\bf{H}}_{m}({{\bf{q}}_k}[n],H_k) \cdot ({{\bf{\bar W}}_{m,k}}[n] - {\bf{W}}_{m,k}[n])) \nonumber \\
&= {a_{m,k}^{(\text{I},o)}}[n]  +  \frac{{{\log }_2}(e)}{2^{a_{m,k}^{\text{I}}[n]}}  \sum\limits_{l \in {\cal M}} {\rm{tr}}({\bf{H}}_{m}({{\bf{q}}_k}[n],H_k)\nonumber \\
& \cdot (\sum\limits_{i \in {\cal K}} {\bf{ W}_{l,i}^{*}}[n]  + {{\bf{ R}}_l^{*}}[n] )) \nonumber \\
& - \frac{{{\log }_2}(e)} {2^{{a_{m,k}^{(\text{I},o)}}[n]}} \sum\limits_{l \in {\cal M}} {\rm{tr}}({{\bf{H}}_{m}}({{\bf{q}}_k}[n],H_k) \cdot (\sum\limits_{i \in {\cal K}} {{\bf{W}}_{l,i}[n]}  + {{\bf{R}}_l}[n] ))\nonumber \\
&- \frac{{{\log }_2}(e)}{2^{{a_{m,k}^{(\text{I},o)}}[n]}}{\rm{tr}} ({\bf{H}}_{m}({{\bf{q}}_k}[n],H_k) \cdot ({{\bf{ W}}_{m,k}^{*}}[n] - {\bf{W}}_{m,k}[n])) \label{app:1:8}\\
&={a_{m,k}^{(\text{I},o)}}[n] + \sum\limits_{(l,i) \ne (m,k)} \frac{{{{\log }_2}(e)}}{2^{{a_{m,k}^{(\text{I},o)}}[n]}}{\rm{tr}} ({{\bf{H}}_{m}}({{\bf{q}}_k}[n],H_k) \nonumber\\
& \cdot ({\bf{W}}_{l.i}^ * [n] - {{\bf{W}}_{l,i}}[n])) \nonumber\\
&+ \sum\limits_{l \in {\cal M}} {\frac{{{{\log }_2}(e)}}{{{2^{{a_{m,k}^{(\text{I},o)}}[n]}}}}{\rm{tr}}} ({{\bf{H}}_{m}}({{\bf{q}}_k}[n],H_k) \cdot({\bf{R}}_l^ * [n] - {{\bf{R}}_l}[n])) \nonumber \\
&={a_{m,k}^{(\text{I},o)}}[n] \!\! + \!\! \! \sum\limits_{(l,i) \ne (m,k)} \!\! \!\! {\rm{tr}} ({\bf{B}}_{m}^{(\text{I},o)} ({{\bf{q}}_k}[n],H_k)\! \cdot \! ({\bf{ W}_{l,i}^{*}}[n] \!\!-\!\! {\bf{W}}_{l,i}[n]))\nonumber \\
&+\sum\limits_{l \in {\cal M}} {{\rm{tr}}} ({\bf{B}}_{m}^{(\text{I},o)}({{\bf{q}}_k}[n],H_k) \cdot ({\bf{R}_l^{*}}[n] - {\bf{R}}_l[n])). \nonumber \label{app:1:10}
\end{align}
\end{subequations}
In \eqref{app:1:6}, the equality \eqref{app:1:8} follows from \eqref{app:1:3} and \eqref{app:1:4}. It is clear that, by combining \eqref{app:1:5} and \eqref{app:1:6}, it holds that the objective values of (P1.3.$o$) and (SDR1.3.$o$) remain the same. According to the above derivation, the constructed solution of $ {{\bf{\bar W}}}_{l,i}[n]$ and $ {{\bf{\bar R}}}_{l}[n]$ is also the optimal solution to (P1.3.$o$). Thus, this completes the proof of  {\textit{Proposition 1}}.

\bibliographystyle{IEEEtran}
\bibliography{IEEEabrv,myref}

\end{document}